\renewcommand*\@fnsymbol[1]{\the#1}
\theoremstyle{plain}
\newtheorem{theorem}{Theorem}[section]
\newtheorem{lemma}[theorem]{Lemma}
\newtheorem{proposition}[theorem]{Proposition}
\newtheorem{corollary}[theorem]{Corollary}
\theoremstyle{definition}
\newtheorem{definition}[theorem]{Definition}
\theoremstyle{remark}
\newtheorem{remark}[theorem]{Remark}
\newtheorem{example}[theorem]{Example}
\numberwithin{equation}{section}
\newcommand{\Interior}{\mathop{\rm int}\nolimits}
\newcommand{\e}{\varepsilon}
\newcommand{\VaR}{\mathop {\rm VaR}\nolimits}
\newcommand{\TVaR}{\mathop {\rm TVaR}\nolimits}
\newcommand{\cF}{{\mathcal F}}
\newcommand{\E}{{\mathbb E}}
\newcommand{\R}{{\mathbb R}}
\newcommand{\probp}{\mathbb P}
\newcommand{\cS}{{\mathscr{S}}}
\newcommand{\cA}{{\mathscr{A}}}
\newcommand{\cM}{{\mathscr{M}}}
\newcommand{\cX}{{\mathscr{X}}}
\newcommand{\cB}{{\mathscr{B}}}
\newcommand{\cU}{{\mathscr U}}
\def\keywords{\vspace{.5em}
{\noindent\textbf{Keywords}:\,\relax%
}}
\def\JELclassification{\vspace{.5em}
{\noindent\textbf{JEL classification}:\,\relax%
}}
\def\MSCclassification{\vspace{.5em}
{\noindent\textbf{MSC}:\,\relax%
}}
\def\@fnsymbol#1{\ensuremath{\ifcase#1\or *\or 1\or 2\or
   3\or 4\or 5\or 6\or 7\or 8\else\@ctrerr\fi}}
\begin{document}

\title{Capital requirements with defaultable securities\footnote{Partial support through the SNF project 51NF40-144611 ``Capital adequacy, valuation, and portfolio selection for insurance companies'' is gratefully acknowledged.}}

\author{\sc{Walter Farkas}\thanks{Partial support by the National Centre of Competence in Research "Financial Valuation and Risk Management" (NCCR FinRisk), project "Mathematical Methods in Financial Risk Management", is gratefully acknowledged. Email: \texttt{walter.farkas@bf.uzh.ch}}\,, \sc{Pablo Koch-Medina}\thanks{Part of this research was undertaken when the author was employed by SwissRe. Email: \texttt{pablo.koch@bf.uzh.ch}}}
\affil{Department of Banking and Finance, University of Zurich, Switzerland}

\author{\sc{Cosimo Munari}\,\thanks{Financial support by the National Centre of Competence in Research "Financial Valuation and Risk Management" (NCCR FinRisk), project "Mathematical Methods in Financial Risk Management", is gratefully acknowledged. Email: \texttt{cosimo.munari@math.ethz.ch}}}
\affil{Department of Mathematics, ETH Zurich, Switzerland}

\date{November 22, 2013}

\maketitle

\begin{abstract}
We study capital requirements for bounded financial positions defined as the minimum amount of capital to invest in a chosen {\em eligible} asset targeting a pre-specified \textit{acceptability} test. We allow for general acceptance sets and general eligible assets, including defaultable bonds. Since the payoff of these assets is not necessarily bounded away from zero the resulting risk measures cannot be transformed into cash-additive risk measures by a change of numeraire. However, extending the range of eligible assets is important because, as exemplified by the recent financial crisis, assuming the existence of default-free bonds may be unrealistic. We focus on finiteness and continuity properties of these general risk measures. As an application, we discuss capital requirements based on Value-at-Risk and Tail-Value-at-Risk acceptability, the two most important acceptability criteria in practice. Finally, we prove that there is no optimal choice of the eligible asset. Our results and our examples show that a theory of capital requirements allowing for general eligible assets is richer than the standard theory of cash-additive risk measures.
\end{abstract}

\keywords{acceptance sets, eligible asset, risk measures, capital adequacy, defaultable securities, Value-at-Risk, Tail Value-at-Risk}

\MSCclassification{91B30, 91B32}

\JELclassification{C60, G11, G22}

\parindent 0em \noindent

\section{Introduction}

The objective of this paper is to investigate capital requirements for bounded financial positions in a world where default-free securities do not necessarily exist. As we will see, this general problem cannot be treated within the standard theory of cash-additive risk measures. Hence, our work extends and complements the literature on cash-additive risk measures on spaces of bounded measurable functions. By doing so we hope to also contribute to a more informed application of the theory of risk measures to capital adequacy issues arising in the design of modern solvency regimes.

\medskip

Liability holders of a financial institution are credit sensitive: they, and regulators on their behalf, are concerned that the institution may fail to fully honor its future obligations. This will be the case if the institution's {\em financial position}, or \textit{capital position} -- the value of assets net of liabilities -- becomes negative in some future state of the economy. To address this concern financial institutions hold {\em risk capital} whose function is to absorb unexpected losses thereby reducing the likelihood that they may become insolvent. A key question in this respect is how much capital a financial institution should be required to hold to be deemed adequately capitalized by the regulator.

\medskip

This type of question is best framed using the concepts of an \textit{acceptance set} and of a {\em risk measure}. Coherent acceptance sets and coherent risk measures were introduced in the seminal paper by Artzner et al.\,(1999) for finite sample spaces and by Delbaen (2002) for general probability spaces. Convex risk measures were studied by F\"{o}llmer and Schied (2002) and by Frittelli and Rosazza Gianin (2002). Since then the theory of risk measures has established itself as the standard theoretical framework to approach the problem of capital adequacy in financial institutions, and continues to influence the debate on modern solvency regimes in both the insurance and the banking world.

\medskip

In this paper we work within a one-period model with dates~$t=0$ and~$t=T$, and assume financial positions at time~$T$ belong to the space~$\cX$ of bounded measurable functions $X:\Omega\to\R$ on a given measurable space $(\Omega,\cF)$. At the core of the theory is the concept of an acceptance set~$\cA\subset\cX$, representing the set of future capital positions corresponding to financial institutions that are deemed to be well capitalized. Once an acceptability criterion has been defined it is natural to ask whether the management of a badly capitalized financial institution can achieve acceptability by implementing appropriate actions and, if so, at what cost. The theory of risk measures was designed to answer this question for a particular type of management action: raising capital and investing in a reference traded asset, the so-called \textit{eligible} asset.

\medskip

In this framework, required capital is defined as the minimum amount of capital that, when invested in the eligible asset, makes a given financial position acceptable. Formally, if $\cA\subset\cX$ is the chosen acceptance set and $S=(S_0,S_T)$ represents a traded asset with initial value $S_0>0$ and positive payoff $S_T\in\cX$, the risk capital required for a position $X\in\cX$ is given by
\begin{equation}
\label{risk measure intro}
\rho_{\cA,S}(X):=\inf\left\{m\in\R \,; \ X+\frac{m}{S_0}S_T\in\cA\right\}\,.
\end{equation}

\medskip

The bulk of the literature on capital requirements has focused on cash-additive risk measures
\begin{equation}
\rho_{\cA}(X):=\rho_{\cA,B}(X)=\inf\left\{m\in\R \,; \ X+m\in\cA\right\}\,,
\end{equation}
for which the eligible asset $B=(B_0,B_T)$ is a risk-free bond with $B_0=1$ and $B_T=1_\Omega$. This case is in fact more general than it may seem at first. Indeed, consider the situation where the payoff~$S_T$ of the eligible asset is bounded away from zero and choose~$S$ as the numeraire. With respect to this new numeraire, any financial position $X\in\cX$ and the acceptance set~$\cA$ take their ``discounted'' form $\widetilde{X}:=X/S_T$ and $\widetilde{\cA}:=\left\{X/S_T \,; \ X\in\cA\right\}$, respectively, and
\begin{equation}
\rho_{\cA,S}(X)=S_0\,\rho_{\widetilde{\cA}}\,(\widetilde{X})\,.
\end{equation}
The risk measure~$\rho_{\cA,S}$ can be therefore expressed in terms of the cash-additive risk measure~$\rho_{\widetilde{\cA}}$.

\medskip

Our starting point is the following observation: \textit{the artifice of changing the numeraire does not work for all choices of the eligible asset}. Indeed, assume the eligible asset $S=(S_0,S_T)$ is a defaultable bond with price $S_0\in(0,1)$ and face value~$1_\Omega$. Since the bond is defaultable, in some future state $\omega\in\Omega$ the payoff $S_T(\omega)$ may be strictly smaller than the face value. Hence,~$S_T$ is a random variable taking values in the interval~$[0,1]$ and represents the ``recovery rate'', i.e. the portion of the face value that is recovered by the bond holder. If~$S_T$ is not bounded away from zero, which includes the possibility that it is zero in some future state, the risk measure~$\rho_{\cA,S}$ \textit{cannot be reduced to a cash-additive risk measure} by changing the numeraire. If the recovery rate is always strictly positive but not bounded away from zero, the discounted positions $\widetilde{X}:=X/S_T$ make sense but no longer belong to~$\cX$ in general. Hence, we can no longer operate in the space of bounded measurable functions and the resulting underlying space will depend on the choice of the numeraire. If, on the other hand,  the recovery rate is zero in some future scenario, then it is not even meaningful to speak about discounted positions. As a consequence, focusing on cash-additive risk measures only, rules out the possibility that the eligible asset is a defaultable bond with a recovery rate which is not bounded away from zero.

\medskip

As the recent financial crisis has made painfully clear, assuming the existence of default-free bonds may turn out to be delusive. Indeed, even recovery rates which are zero in some future scenario are not unrealistic. Zero recovery rates  arise naturally also in situations where actual recovery is strictly positive but may come too late to be of practical relevance in the capital adequacy assessment. In such cases, for solvency purposes, it may be necessary to assume zero recovery. Hence, to obtain a more realistic theory of capital requirements allowing for the possibility that the eligible asset is a defaultable bond, \textit{we are forced to go beyond cash-additivity and consider risk measures~$\rho_{\cA,S}$ with respect to general eligible assets}.

\medskip

Given an acceptance set $\cA$ and a general eligible asset $S=(S_0,S_T)$ we will focus on the following three questions:
\begin{enumerate}
\item When is $\rho_{\cA,S}$ finitely valued? This is an important question, also from an economic perspective. Indeed, if $\rho_{\cA,S}(X)=-\infty$ for a position $X\in\cX$, then we can extract arbitrary amounts of capital from the financial institution while retaining acceptability, which is clearly not economically meaningful. If, on the other hand, $\rho_{\cA,S}(X)=\infty$, then~$X$ cannot be made acceptable no matter how much capital we raise and invest in the eligible asset. Hence, the choice of the eligible asset is not ``effective'' when it comes to modifying the acceptability of~$X$.

\item When is $\rho_{\cA,S}$ continuous? This is also of practical relevance since typically capital positions are based on estimates and can only be assessed in an approximate manner. Thus, it is important to know whether capital requirements are ``stable'' with respect to small perturbations of the capital position.

\item Is it possible to find an optimal eligible asset leading to the lowest risk capital compatible with a given acceptability criterion? This is related to the ``efficiency'' of the choice of the eligible asset, i.e. to the ability to reach acceptance with the least possible amount of capital.
\end{enumerate}

\medskip

In addressing the first two questions we will find that if we allow for general eligible assets, the range of possible behaviors is much broader than in the standard cash-additive setting, where every risk measure on $\cX$ is finitely valued and globally Lipschitz continuous. We will exhibit examples of capital requirements which are neither finitely valued nor continuous. This is even the case when the underlying acceptability criterion is based on Value-at-Risk or Tail Value-at-Risk, which are the two typical choices in modern regulatory environments.

\medskip

As a consequence of general results on finiteness (Theorem~\ref{conic finiteness}) and continuity (Theorem~\ref{convexity and continuity} and Theorem~\ref{pointwise semicontinuity}), we will show that capital requirements based on Value-at-Risk are not always finitely valued (Proposition~\ref{ruin and eligibility}), and, even when finitely valued, are not always continuous (Proposition~\ref{continuity var based}). Also capital requirements based on Tail Value-at-Risk need not be finitely valued (Proposition~\ref{TVaR-acceptance}), but, whenever finite, they are also continuous (Proposition~\ref{continuity of TVaR}). In fact, for capital requirements based on Value-at-Risk and Tail-Value-at-Risk acceptability we will provide complete characterizations of finiteness and continuity. In the case that the eligible asset is a defaultable bond, these characterizations show that finiteness and continuity depend on the extent to which the issuer of the bond can default. In particular, our results show that, when the underlying probability space is nonatomic, capital requirements based on Value-at-Risk are continuous if and only if the recovery rate is bounded away from zero. This is in contrast to Tail-Value-at-Risk acceptability, whose corresponding capital requirements may be continuous even in cases where the recovery rate is zero in some future scenario.

\medskip

With respect to the third question we will show (Theorem~\ref{general optimality theorem}) that no optimal choice of the eligible asset is possible. This extends a result in Artzner et al.\,(2009) obtained in the case of coherent acceptance sets in finite sample spaces to the general case where neither coherence nor convexity requirements are imposed and where we allow for infinite sample spaces. Consequently, our result also applies to risk measures based on Value-at-Risk -- which, as is well known, fail to be convex in general.

\medskip

In this paper we consider financial positions in spaces of bounded measurable functions, the original setting in which capital requirements were studied. In the recent literature other types of spaces such as $L^p$-spaces, Orlicz spaces, or even abstract ordered spaces have been considered -- see for instance Kaina and R\"{u}schendorf~(2009), Cheridito and Li~(2009), Biagini and Frittelli~(2009) and Konstantinides and Kountzakis~(2011) and the literature cited therein. Those papers deal mainly with dual representation theorems and continuity properties of risk measures. Moreover, they focus on risk measures that are either cash-additive or additive with respect to eligible assets which are interior points of the positive cone, which in our case are precisely those eligible assets whose payoff is bounded away from zero. Therefore, none of those results can be applied to the type of situation we are interested in. At the same time, the results here exploit the particular structure of spaces of bounded measurable functions -- most importantly the fact that their positive cones have nonempty interior, which amongst other things implies that all acceptance sets have nonempty interior. Hence, the techniques used here do not transfer easily to~$L^p$ or Orlicz spaces, whose positive cones generally have empty interior. For a treatment of risk measures in this kind of situation we refer to Farkas et. al (2014).


\section{Preliminaries}

We consider a single-period economy with dates~$t=0$ and~$t=T$ and a measurable space $(\Omega,\cF)$ representing uncertainty at time~$T$. Thus, the elements of~$\Omega$ are interpreted as the possible future states of the economy.


\subsection{Financial positions and acceptance sets}
\label{financial positions}

We assume that the capital position of a financial institution at time~$T$ belongs to the Banach space~$\cX$ of real-valued, bounded, $\cF$-measurable functions on~$\Omega$ equipped with the supremum norm
\begin{equation}
\left\|X\right\|:=\sup_{\omega\in\Omega}\left|X(\omega)\right|\,.
\end{equation}
If $A\subset\Omega$ we denote by~$1_A$ the associated indicator function. For a subset $\cA\subset\cX$ we denote by~$\Interior(\cA)$, \,$\overline{\cA}$ and~$\partial\cA$ the interior, the closure and the boundary of~$\cA$, respectively.

\medskip

The space~$\cX$ becomes a Banach lattice when endowed with the pointwise ordering defined by $Y\geq X$ whenever $Y(\omega)\geq X(\omega)$ for all $\omega\in\Omega$. The corresponding positive cone is then given by $\cX_+:=\{X\in\cX \,;\, X\ge 0\}$. Clearly,~$\Interior(\cX_+)$ consists of those positions~$X$ in~$\cX_+$ that are \textit{bounded away from zero}, i.e. such that for some $\e>0$ we have $X\geq\e1_{\Omega}$.

\medskip

We now recall the concept of an acceptance set, which is central to this paper.

\begin{definition}
A subset $\cA\subset\cX$ is called an {\em acceptance set} if it satisfies the following two axioms:
\begin{enumerate}
\item [(A1)] $\cA$ is a nonempty, proper subset of~$\cX$ (non-triviality);\label{non-triviality axiom}
\item [(A2)] if $X\in\cA$ and $Y\geq X$ then $Y\in\cA$ (monotonicity).\label{monotonicity axiom}
\end{enumerate}
\end{definition}

\medskip

An acceptance set represents the set of future positions that are deemed to provide a reasonable security to the liability holders. Hence, testing whether a financial institution is adequately capitalized or not with respect to a chosen acceptance set~$\cA$ reduces to establishing whether its financial position belongs to~$\cA$ or not. Therefore, one could refer to~$\cA$ as being a {\em capital adequacy test}.

\medskip

\begin{remark}
\begin{enumerate}[(i)]
\item The defining properties of an acceptance set encapsulate what one would expect from any non-trivial capital adequacy test: some -- but not all -- positions should be acceptable, and
any financial position dominating an already accepted position should also be acceptable.
\item \textit{Coherent} acceptance sets, introduced by Artzner et al.\,(1999), are acceptance sets that are convex cones. \textit{Convex} acceptance sets were introduced by F\"{o}llmer and Schied~(2002) and by Frittelli and Rosazza Gianin~(2002). Note that in the above definition of an acceptance set neither coherence nor convexity is required. We will also pay particular attention to \textit{conic} acceptance sets, i.e. acceptance sets which are cones, of which the acceptance set based on Value-at-Risk, introduced in Example~\ref{var tvar example}, is a prominent representative.
\end{enumerate}
\end{remark}

\medskip

The following result summarizes some useful properties of acceptance sets in the context of spaces of bounded financial positions.

\begin{lemma}
\label{properties acceptance sets}
Let $\cA\subset\cX$ be an arbitrary acceptance set. Then the following statements hold:
\begin{enumerate}[(i)]
	\item $\cA$ contains all sufficiently large constants;
	\item $X\in\Interior(\cA)$ if and only if $X-\e 1_{\Omega}\in \cA$ for some $\e>0$;
	\item $X\notin\overline{\cA}$ if and only if $X+\e 1_{\Omega}\not\in\cA$ for some $\e>0$;
	\item $\Interior(\cA)$ is an acceptance set, in particular $\Interior(\cA)\neq\emptyset$, and $\overline{\Interior(\cA)}=\overline{\cA}$;
	\item $\overline{\cA}$ is an acceptance set, in particular $\overline{\cA}\neq\cX$, and $\Interior(\overline{\cA})=\Interior(\cA)$.
\end{enumerate}
\end{lemma}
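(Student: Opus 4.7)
My plan is to prove the five claims essentially in order, exploiting that in $\cX$ with the supremum norm the open ball $B(X,\e)$ is sandwiched between the order intervals $\{Y : X-\e 1_\Omega < Y < X+\e 1_\Omega\}$ and $\{Y : X-\e 1_\Omega \le Y \le X+\e 1_\Omega\}$, so that topological statements translate into monotone comparisons with $X \pm \e 1_\Omega$.

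For (i), I would pick any $X\in\cA$ (which exists by (A1)), observe $\|X\|1_\Omega \ge X$, and invoke (A2) to conclude $c 1_\Omega\in\cA$ for every $c\ge\|X\|$. For (ii), the forward direction is the definition of interior. For the converse, if $X-\e 1_\Omega\in\cA$ and $\|Y-X\|<\e$, then $Y\ge X-\e 1_\Omega$ pointwise, so (A2) gives $Y\in\cA$, proving $B(X,\e)\subset\cA$. Part (iii) is dual: if $X\not\in\overline{\cA}$, some ball $B(X,2\e)$ avoids $\cA$ and in particular $X+\e 1_\Omega\not\in\cA$; conversely, if some $Y\in\cA$ were in $B(X,\e)$ then $Y\le X+\e 1_\Omega$ and (A2) would force $X+\e 1_\Omega\in\cA$.

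For (iv) and (v), I would derive everything from (i)--(iii). That $\Interior(\cA)$ is nonempty follows from (i) combined with (ii): large constants $c 1_\Omega$ satisfy $(c-1)1_\Omega\in\cA$, so $c 1_\Omega\in\Interior(\cA)$. Monotonicity of $\Interior(\cA)$ follows from (ii): if $X-\e 1_\Omega\in\cA$ and $Y\ge X$, then $Y-\e 1_\Omega\ge X-\e 1_\Omega$ is in $\cA$ by (A2), so $Y\in\Interior(\cA)$. For $\overline{\Interior(\cA)}=\overline{\cA}$, the inclusion $\subset$ is trivial; for $\supset$, for $X\in\overline{\cA}$ I use the contrapositive of (iii) to get $X+\delta 1_\Omega\in\cA$ for every $\delta>0$, then apply (ii) with $\e=\delta/2$ to see $X+\delta 1_\Omega\in\Interior(\cA)$, and let $\delta\downarrow 0$. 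The same ideas give monotonicity of $\overline{\cA}$ and $\Interior(\overline{\cA})=\Interior(\cA)$.

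The one step I expect to be the main obstacle is showing that $\overline{\cA}$ is proper (non-triviality of $\overline{\cA}$). The argument is: since $\cA\neq\cX$, pick any $X_0\notin\cA$; then not every constant can lie in $\cA$, because if $-n 1_\Omega\in\cA$ for all $n\in\N$ then (A2) and $X_0\ge -\|X_0\|1_\Omega$ would force $X_0\in\cA$. Hence the set $C:=\{c\in\R : c 1_\Omega\in\cA\}$ is bounded below; let $c^*:=\inf C$, which is finite by (i) and the previous observation. For any $c<c^*$ and any $\e>0$ with $c+\e<c^*$, one has $(c+\e)1_\Omega\notin\cA$, so by (iii) $c 1_\Omega\notin\overline{\cA}$, showing $\overline{\cA}\neq\cX$. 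Everything else in (iv)--(v) is a direct bookkeeping consequence of (i)--(iii).
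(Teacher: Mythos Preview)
Your proof is correct and uses the same underlying mechanism as the paper: the order--topology correspondence in the sup norm (balls around $X$ are captured by $X\pm\e 1_\Omega$), combined with monotonicity. Parts~(i)--(iii) are treated identically in spirit, and your reduction of (iv)--(v) to (ii)--(iii) is clean.

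The one place where your route differs from the paper's is the properness of $\overline{\cA}$. You build an explicit witness by analysing the half-line of constants $\{c\in\R : c1_\Omega\in\cA\}$, locating its infimum, and using~(iii) to show $c1_\Omega\notin\overline{\cA}$ for $c$ strictly below that infimum. The paper instead first proves $\Interior(\overline{\cA})\subset\cA$ directly (if $X\in\Interior(\overline{\cA})$, pick $X-\e 1_\Omega\in\overline{\cA}$, approximate it by some $Y\in\cA$ with $Y\le X$, and invoke monotonicity), which gives $\Interior(\overline{\cA})=\Interior(\cA)$; then $\overline{\cA}\neq\cX$ drops out immediately, since otherwise $\Interior(\overline{\cA})=\cX$ would force $\cA=\cX$. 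The paper's route is a little shorter and avoids the detour through constants, while yours has the minor advantage of producing an explicit element outside $\overline{\cA}$. Both are entirely valid.
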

\proof
Since $\cA\neq\emptyset$, assertion~{\em (i)} is clear by monotonicity. Statements~{\em (ii)} and~{\em (iii)} also follow easily using the monotonicity of~$\cA$.

\smallskip

To prove {\em (iv)} note that $\Interior(\cA)\neq\emptyset$ since~$\cX_+$ has nonempty interior and, by monotonicity,~$\cA$ contains a translate of~$\cX_+$. Take $X\in\Interior(\cA)$ so that $X+\cU\subset\cA$ for some neighborhood of zero~$\cU$. As a result, for any $Y\geq X$ we have $Y+\cU\subset\cA$, implying that~$\Interior(\cA)$ is monotone. To conclude the proof of~\textit{(iv)} it is enough to note that $\cA\subset\overline{\Interior(\cA)}$. Indeed, if $X\in\cA$, then $X_n:= X+\frac{1}{n}1_\Omega$ defines a sequence in~$\Interior(\cA)$ converging to~$X$.

\smallskip

To prove~{\em (v)} we first establish that $\Interior(\overline{\cA})=\Interior(\cA)$ by showing $\Interior(\overline{\cA})\subset\cA$. Indeed, take $X\in\Interior(\overline{\cA})$. Then $X-\e 1_\Omega\in\overline{\cA}$ for a suitably small $\e>0$. Hence, for $0<\delta<\e$ we find $Y\in\cA$ with $\left\|X-\e 1_\Omega-Y\right\|<\delta$, implying $Y\leq X-\e 1_\Omega+\delta 1_\Omega\leq X$. Then, by monotonicity, $X\in\cA$ and therefore the interior of~$\overline{\cA}$ coincides with~$\Interior(\cA)$. In particular, $\overline{\cA}\neq \cX$. To prove monotonicity take $X\in\overline{\cA}$ and $Y\geq X$. Let~$(X_n)$ be a sequence in~$\cA$ converging to~$X$. Since $Y_n:=X_n+Y-X$ defines a sequence with limit~$Y$ and such that $Y_n\geq X_n$, we conclude that~$\overline{\cA}$ is monotone.
\endproof

\medskip

A detailed treatment of Value-at-Risk and Tail-Value-at-Risk, the basis for the following example, can be found in Section~4.4 in F\"{o}llmer and Schied~(2011).

\smallskip

\begin{example}[$\VaR$- and $\TVaR$-acceptability]
\label{var tvar example}
Assume a probability~$\probp$ is given on~$(\Omega,\cF)$ and take~$\alpha\in(0,1)$.

\begin{enumerate}[(i)]
\item The {\em Value-at-Risk of $X\in\cX$ at the level $\alpha$} is defined as
\begin{equation}
\VaR_\alpha(X):=\inf\{m\in\R \,; \ \probp[X+m<0]\le\alpha\}\;.
\end{equation}
The set
\begin{equation}
\label{var acceptance set}
\cA_\alpha:=\{X\in\cX \,; \ \VaR_\alpha(X)\le 0\}=\{X\in\cX \,; \ \probp[X<0]\le\alpha \}
\end{equation}
is an acceptance set which is a closed cone, but which fails, in general, to be convex. The second description of~$\cA_\alpha$ in~\eqref{var acceptance set} shows that $\VaR$-acceptability at level~$\alpha$ is equivalent to requiring that the probability of ruin of an institution is not exceeding the threshold~$\alpha$ and helps explain its popularity. Finally, we note for later use that $\Interior(\cA_\alpha)=\{X\in\cX \,;\, \VaR_\alpha(X)< 0\}$. This follows from the well-known continuity of the cash-additive risk measure $\VaR_\alpha:\cX\to\R$.

\item The {\em  Tail Value-at-Risk of $X\in\cX$ at the level $\alpha$} is defined as
\begin{equation}
\TVaR_\alpha(X):=\frac{1}{\alpha} \int_0^\alpha\VaR_\beta(X)d\beta\;.
\end{equation}
Tail Value-at-Risk is also known under the names of {\em Expected Shortfall},
{\em Conditional Value-at-Risk}, or {\em Average Value-at-Risk}. The set
\begin{equation}
\cA^\alpha:=\{X\in\cX \,;\, \TVaR_\alpha(X)\le 0\}
\end{equation}
is probably the most well-known example of a closed, coherent acceptance set. Note also here that $\Interior(\cA^\alpha)=\{X\in\cX \,;\, \TVaR_\alpha(X)<0\}$, as implied by the continuity of the cash-additive risk measure $\TVaR_\alpha:\cX\to\R$.
\end{enumerate}
\end{example}


\subsection{Required capital and risk measures}
\label{Risk measures with respect to a single eligible asset}

In order to introduce the notion of a capital requirement, we consider traded assets~$S=(S_0,S_T)$ with initial value $S_0>0$ and nonzero payoff~$S_T\in\cX_+$. The following definition goes back to Definition~2.2 in Artzner et al.\,(1999). As usual, we denote by $\overline{\R}:=\R\cup\{-\infty, \infty\}$ the extended real line.

\medskip

\begin{definition}
\label{risk measure single elig def}
Let~$\cA$ be an arbitrary subset of~$\cX$ and $S=(S_0,S_T)$ a traded asset. The {\em capital requirement} or \textit{risk measure} with respect to~$\cA$ and~$S$ is the map $\rho_{\cA,S}:\cX\to\overline{\R}$ defined by setting for $X\in\cX$
\begin{equation}
\label{risk measure formula}
\rho_{\cA,S}(X):=\inf\left\{m\in\R \,; \ X+\frac{m}{S_0}S_T\in\cA\right\}\,.
\end{equation}

\smallskip

The asset~$S$ is called the {\em reference} or {\em eligible} asset.
\end{definition}

\medskip

The quantity~$\rho_{\cA,S}(X)$ represents a capital amount. If~$\cA$ is an acceptance set and $\rho_{\cA,S}(X)$ is finite and positive, then it intuitively represents the minimum investment in the asset~$S$ that is required to make the unacceptable financial position~$X$ acceptable or, put differently, the cost of making~$X$ acceptable. Similarly, if finite and negative, we interpret $\rho_{\cA,S}(X)$ as the amount of capital that can be extracted from a well-capitalized financial institution without compromising the acceptability of its financial position.

\smallskip

\begin{remark}
\label{making acceptable remark}
\begin{enumerate}[(i)]
  \item Let~$\cA$ be an arbitrary subset of $\cX$. If $\rho_{\cA,S}(X)\in\R$ for a position $X\in\cX$, then we can only infer that $X+\frac{\rho_{\cA,S}(X)}{S_0}S_T\in\overline{\cA}$ but not necessarily $X+\frac{\rho_{\cA,S}(X)}{S_0}S_T\in\cA$. Hence, it is only in this approximate sense that $\rho_{\cA,S}(X)$ represents the ``minimum'' amount of capital required to make a position acceptable. Clearly, if~$\cA$ is closed the the infimum in~\eqref{risk measure formula} is always attained.
  \item Note that if $S=(S_0,S_T)$ is the risk-free asset with $S_0=1$ and $S_T=1_\Omega$, the previous definition coincides with the standard definition of a (cash-additive) risk measure as given for example in Definition~4.1 in F\"{o}llmer and Schied~(2011). In this case we will simply write~$\rho_\cA$.
\end{enumerate}
\end{remark}

\smallskip

\begin{remark}
Note that the eligible asset can be taken to be any traded asset with nonzero, positive payoff, even if the payoff is zero in some future state. In particular, defaultable bonds, options, and limited-liability assets all qualify as eligible assets.
\end{remark}

\medskip

\begin{definition}
\label{definition of abstract risk measures}
Let $S=(S_0,S_T)$ be a traded asset. A map $\rho:\cX\to\overline{\R}$ is called \textit{$S$-additive} if
\begin{equation}
\rho(X+\lambda S_T)=\rho(X)-\lambda S_0 \ \ \mbox{for all} \ X\in\cX \ \mbox{and} \ \lambda\in\R\,,
\end{equation}
and is called \textit{decreasing} if
\begin{equation}
\rho(X)\geq\rho(Y) \ \ \mbox{whenever} \ X\leq Y\,.
\end{equation}
\end{definition}

\medskip

The proposition below collects the main properties of~$\rho_{\cA,S}$, which are the general counterparts to the well-known properties of cash-additive risk measures as stated in Section~4.1 in F\"{o}llmer and Schied~(2011). The proof is straightforward and is omitted.

\begin{lemma}
\label{properties proposition}
Let~$\cA$ be an arbitrary subset of~$\cX$ and $S=(S_0,S_T)$ a traded asset. The following statements hold:
\begin{enumerate}[(i)]
\item  $\rho_{\cA,S}$ is $S$-additive;
\item if $\cA$ satisfies the monotonicity axiom {\rm (A2)}, then $\rho_{\cA,S}$ is decreasing;
\item if $\cA$ satisfies the non-triviality axiom {\rm (A1)}, then $\rho_{\cA,S}$ cannot be identically~$\infty$ or~$-\infty$;
\item if $\rho:\cX\to\overline{\R}$ is $S$-additive, decreasing, and not identically equal to~$\infty$ or~$-\infty$, then the set $\cA_\rho:=\{X\in\cX \,; \ \rho(X)\leq 0\}$ is an acceptance set and $\rho=\rho_{\cA_\rho,S}$.
\end{enumerate}
\end{lemma}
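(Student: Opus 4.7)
The plan is to unwind each infimum using a single change-of-variable trick, $m' = m + \lambda S_0$, together with the order structure of $\cX$. For (i), writing $X + \lambda S_T + \frac{m}{S_0}S_T = X + \frac{m + \lambda S_0}{S_0}S_T$ and substituting $m' = m + \lambda S_0$ in the defining infimum immediately gives $\rho_{\cA,S}(X + \lambda S_T) = \rho_{\cA,S}(X) - \lambda S_0$. For (ii), if $X \leq Y$ and $X + \frac{m}{S_0}S_T \in \cA$, then $Y + \frac{m}{S_0}S_T \geq X + \frac{m}{S_0}S_T$, so by (A2) also $Y + \frac{m}{S_0}S_T \in \cA$; hence the feasible set for $Y$ contains the one for $X$ and the infima satisfy $\rho_{\cA,S}(Y) \leq \rho_{\cA,S}(X)$.

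For (iii), picking $X_0 \in \cA$ shows $\rho_{\cA,S}(X_0) \leq 0 < \infty$, ruling out $\rho_{\cA,S} \equiv \infty$. For the $-\infty$ half one in fact also needs monotonicity: pick $Y_0 \notin \cA$; since $\frac{m}{S_0}S_T \leq 0$ for $m < 0$, we have $Y_0 + \frac{m}{S_0}S_T \leq Y_0$, so the contrapositive of (A2) gives $Y_0 + \frac{m}{S_0}S_T \notin \cA$ for such $m$, whence $\rho_{\cA,S}(Y_0) \geq 0 > -\infty$.

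For (iv), the non-triviality of $\cA_\rho := \{X \in \cX \,;\, \rho(X) \leq 0\}$ follows from $S$-additivity: if $\rho(X_0) < \infty$, then for $\lambda$ large enough $\rho(X_0 + \lambda S_T) = \rho(X_0) - \lambda S_0 \leq 0$; symmetrically, starting from $Y_0$ with $\rho(Y_0) > -\infty$ one obtains $Y_0 + \lambda S_T \notin \cA_\rho$ for $\lambda$ sufficiently negative. Monotonicity of $\cA_\rho$ is immediate from $\rho$ being decreasing. The identity $\rho = \rho_{\cA_\rho,S}$ then follows from
\[
\rho_{\cA_\rho,S}(X) = \inf\{m \in \R \,;\, \rho(X + \tfrac{m}{S_0}S_T) \leq 0\} = \inf\{m \in \R \,;\, \rho(X) - m \leq 0\} = \rho(X),
\]
where the middle equality uses $S$-additivity. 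The only mildly delicate point—the main obstacle, such as it is—is justifying this last computation when $\rho(X) \in \{\pm\infty\}$; under the natural conventions $\pm\infty - m = \pm\infty$, the feasible set is either all of $\R$ (with infimum $-\infty$) or empty (with infimum $+\infty$), and in both cases one recovers $\rho(X)$.
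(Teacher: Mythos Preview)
Your proof is correct and follows the natural line; the paper itself omits the argument as ``straightforward,'' and what you have written is exactly the kind of verification one expects. Your observation on part~(iii) is also accurate and worth highlighting: as literally stated, the lemma asserts that (A1) alone prevents $\rho_{\cA,S}$ from being identically $-\infty$, but this is false without monotonicity---for instance, $\cA := \cX\setminus\{0\}$ satisfies (A1), yet for every $X$ the equation $X+\tfrac{m}{S_0}S_T=0$ has at most one solution in $m$, so $\rho_{\cA,S}(X)=-\infty$ for all $X$. Your argument for the $-\infty$ half, which invokes (A2) via $Y_0+\tfrac{m}{S_0}S_T\le Y_0$ for $m<0$, is the correct fix. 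Your treatment of the extended-real cases in the final identity of~(iv) is also clean and complete.
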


\medskip

The following result is particularly useful when studying continuity properties of capital requirements.

\begin{lemma}
\label{inclusions A and rho}
Let $\cA\subset\cX$ be an acceptance set and~$S=(S_0,S_T)$ a traded asset. Then we have
\begin{equation}
\label{chain of inclusions}
\Interior(\cA)\subset\{X\in\cX \,; \ \rho_{\cA,S}(X)<0\}\subset\cA\subset\{X\in\cX \,; \ \rho_{\cA,S}(X)\leq0\}\subset\overline{\cA}\,.
\end{equation}
Moreover, we have $\{X\in\cX \,; \ \rho_{\cA,S}(X)=0\}\subset\partial\cA$.
\end{lemma}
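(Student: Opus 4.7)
The plan is to prove the four inclusions in sequence, and then derive the boundary statement as a direct corollary. Throughout, the tools are minimal: the definition of $\rho_{\cA,S}$ as an infimum, the monotonicity axiom (A2) for $\cA$, and the characterizations of interior and closure in Lemma \ref{properties acceptance sets}(ii)--(iii). The one quantitative ingredient is that $S_T\in\cX_+$ is nonzero, so $\|S_T\|>0$, which allows one to compare additive perturbations by $\e 1_\Omega$ with multiplicative perturbations by $\frac{m}{S_0}S_T$.

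For the first inclusion, take $X\in\Interior(\cA)$ and use Lemma \ref{properties acceptance sets}(ii) to find $\e>0$ with $X-\e 1_\Omega\in\cA$. Choose any $m<0$ satisfying $|m|\,\|S_T\|\le \e S_0$; then $\frac{m}{S_0}S_T\ge -\e 1_\Omega$, hence $X+\frac{m}{S_0}S_T\ge X-\e 1_\Omega\in\cA$ and monotonicity gives $X+\frac{m}{S_0}S_T\in\cA$, so $\rho_{\cA,S}(X)\le m<0$. The second inclusion is immediate from the infimum: if $\rho_{\cA,S}(X)<0$, pick $m<0$ with $X+\frac{m}{S_0}S_T\in\cA$; since $\frac{m}{S_0}S_T\le 0$, monotonicity forces $X\in\cA$. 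The third inclusion is trivial, since $m=0$ is admissible whenever $X\in\cA$.

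The fourth inclusion is the most delicate and I would argue by contradiction. Suppose $\rho_{\cA,S}(X)\le 0$ but $X\notin\overline{\cA}$. By Lemma \ref{properties acceptance sets}(iii) there exists $\e>0$ with $X+\e 1_\Omega\notin\cA$. Pick $\delta>0$ with $\delta\,\|S_T\|\le\e S_0$; since $\rho_{\cA,S}(X)\le 0<\delta$, there exists $m\le\delta$ with $X+\frac{m}{S_0}S_T\in\cA$. Whether $m\le 0$ (in which case $\frac{m}{S_0}S_T\le 0\le\e 1_\Omega$) or $0<m\le\delta$ (in which case $\frac{m}{S_0}S_T\le\frac{\delta\,\|S_T\|}{S_0}1_\Omega\le\e 1_\Omega$), one has $X+\frac{m}{S_0}S_T\le X+\e 1_\Omega$, and monotonicity then gives $X+\e 1_\Omega\in\cA$, contradicting the choice of $\e$.

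Finally, the statement $\{X\in\cX\,;\,\rho_{\cA,S}(X)=0\}\subset\partial\cA$ follows by combining the inclusions already proved: $\rho_{\cA,S}(X)=0$ puts $X$ in $\overline{\cA}$ by the fourth inclusion, and the contrapositive of the first inclusion rules out $X\in\Interior(\cA)$, so $X\in\overline{\cA}\setminus\Interior(\cA)=\partial\cA$. I do not anticipate a genuine obstacle; the only point requiring care is choosing $\e$ and $\delta$ coherently through $\|S_T\|$ so that the additive and the $S_T$-multiplicative perturbations can be compared uniformly in the sign of $m$.
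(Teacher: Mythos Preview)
Your proof is correct and follows the same four-inclusion structure as the paper. The only notable difference is in how you handle the first and last inclusions: you pass through the $1_\Omega$-characterizations of Lemma~\ref{properties acceptance sets}(ii)--(iii) and then compare $\frac{m}{S_0}S_T$ against $\e 1_\Omega$ via the bound $\|S_T\|$, whereas the paper exploits directly that $\Interior(\cA)$ and $\cX\setminus\overline{\cA}$ are open. Thus if $X\in\Interior(\cA)$ one has $X-\lambda S_T\in\cA$ for some small $\lambda>0$, and $S$-additivity gives $\rho_{\cA,S}(X)=\rho_{\cA,S}(X-\lambda S_T)-\lambda S_0\le -\lambda S_0<0$; dually, if $X\notin\overline{\cA}$ then $X+\lambda S_T\notin\cA$ for small $\lambda>0$, whence $\rho_{\cA,S}(X)=\rho_{\cA,S}(X+\lambda S_T)+\lambda S_0\ge \lambda S_0>0$. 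This sidesteps the $\|S_T\|$ bookkeeping and the case split on the sign of $m$. Your route is equally valid and arguably makes the role of monotonicity more explicit, but the paper's version is a touch shorter.
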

\begin{proof}
First, note that $\rho_{\cA,S}(X)\leq0$ whenever $X\in\cA$ and $\rho_{\cA,S}(X)\geq0$ whenever $X\notin\cA$, proving the two central inclusions in~\eqref{chain of inclusions}.

\smallskip

To prove the first inclusion in~\eqref{chain of inclusions}, take $X\in\Interior(\cA)$. Then $X-\lambda S_T\in\cA$ for some $\lambda>0$. As a result, $\rho_{\cA,S}(X)+\lambda S_0=\rho_{\cA,S}(X-\lambda S_T)\leq0$, implying that $\rho_{\cA,S}(X)<0$.

\smallskip

To prove the last inclusion in~\eqref{chain of inclusions}, assume $X\notin\overline{\cA}$. Since $X+\lambda S_T\notin\cA$ for a suitably small $\lambda>0$, we obtain $\rho_{\cA,S}(X)-\lambda S_0=\rho_{\cA,S}(X+\lambda S_T)\geq0$. It follows that $\rho_{\cA,S}(X)>0$, concluding the proof. As a consequence of~\eqref{chain of inclusions} we also obtain that $X\in\partial\cA$ if $\rho_{\cA,S}(X)=0$.
\end{proof}

\medskip

\begin{remark}[Change of numeraire]
\label{discounted positions note}
Assume $\cA\subset\cX$ is an acceptance set and $S=(S_0,S_T)$ a traded asset such that~$S_T$ is bounded away from zero. Write $\widetilde X:=X/S_T$ and note that $\widetilde X\in\cX$ for any $X\in\cX$. If we introduce the acceptance set $\widetilde\cA:=\{\widetilde{X}\in\cX \,; \ X\in\cA\}$, we have for every $X\in\cX$
\begin{equation}
\label{change of numeraire}
\rho_{\cA,S}(X)=S_0\,\rho_{\widetilde\cA}\,(\widetilde X)\,.
\end{equation}
Hence by a simple change of numeraire we can transform the risk measure $\rho_{\cA,S}$ into the cash-additive risk measure $\rho_{\widetilde\cA}$. However, as argued in the introduction, this device does not work if~$S_T$ is not bounded away from zero. For this reason, focusing on cash-additive risk measures based on the argument that they correspond to risk measures for ``discounted'' positions implicitly assumes that the payoff of the actual eligible asset is bounded away from zero. This assumption critically limits the choice of such asset and inhibits the use of important classes including defaultable bonds.

Because of the above mentioned ``discounting'' argument, the eligible asset is sometimes referred to as the {\em numeraire asset}. We prefer to use the term {\em eligible asset} since it emphasizes that the asset has been chosen as the vehicle to define required capital. The fact that, if at all possible, it may sometimes be convenient to choose the eligible asset as the unit of account is merely incidental.
\end{remark}

\medskip

It is well known that cash-additive risk measures in the space of bounded measurable functions are always finitely valued and Lipschitz continuous, see for instance Lemma~4.3 in F\"{o}llmer and Schied~(2011). Hence the following result is an immediate consequence of the representation~\eqref{change of numeraire}.

\begin{proposition}
\label{finiteness baz}
Let $\cA\subset\cX$ be an arbitrary acceptance set and assume $S=(S_0,S_T)$ is a traded asset with payoff~$S_T$ bounded away from zero. Then $\rho_{\cA,S}$ is finitely valued and Lipschitz continuous.
\end{proposition}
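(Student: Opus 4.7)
The plan is to reduce the claim to the well-known fact (Lemma~4.3 in F\"{o}llmer and Schied~(2011)) that every cash-additive risk measure on~$\cX$ is finitely valued and $1$-Lipschitz, by exploiting the change of numeraire identity recorded in Remark~\ref{discounted positions note}. The essential input is that $S_T\geq\e 1_\Omega$ for some $\e>0$, which guarantees that $1/S_T$ belongs to~$\cX$ and that the pointwise multiplication $X\mapsto X/S_T$ is a linear order-isomorphism of~$\cX$ satisfying $\|X/S_T\|\leq\|X\|/\e$ for every $X\in\cX$.

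First I would observe that $\widetilde\cA:=\{X/S_T \,;\, X\in\cA\}$ inherits from~$\cA$ the structure of an acceptance set: non-triviality is clear since $X\mapsto X/S_T$ is a bijection, and monotonicity holds because $S_T>0$ implies that $Y\geq X$ is equivalent to $Y/S_T\geq X/S_T$. Then I would verify identity~\eqref{change of numeraire} by a direct manipulation of the defining infimum: the condition $X+(m/S_0)S_T\in\cA$ is equivalent to $\widetilde X+(m/S_0)1_\Omega\in\widetilde\cA$, and factoring $S_0$ out of the resulting infimum yields $\rho_{\cA,S}(X)=S_0\rho_{\widetilde\cA}(\widetilde X)$.

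Once this representation is in place, finiteness of $\rho_{\cA,S}$ follows immediately from finiteness of the cash-additive risk measure $\rho_{\widetilde\cA}$ on~$\cX$, while Lipschitz continuity follows from the chain
\begin{equation}
\left|\rho_{\cA,S}(X)-\rho_{\cA,S}(Y)\right|=S_0\left|\rho_{\widetilde\cA}(\widetilde X)-\rho_{\widetilde\cA}(\widetilde Y)\right|\leq S_0\left\|\widetilde X-\widetilde Y\right\|\leq\frac{S_0}{\e}\left\|X-Y\right\|\,,
\end{equation}
yielding Lipschitz constant~$S_0/\e$. There is no real obstacle: the whole content is bookkeeping to check that the pointwise rescaling by $1/S_T$ converts $S$-additivity of $\rho_{\cA,S}$ into cash-additivity of $\rho_{\widetilde\cA}$ and sends~$\cA$ to the acceptance set~$\widetilde\cA$, after which the standard estimate on~$\cX$ applies directly.
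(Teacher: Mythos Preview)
Your proposal is correct and follows essentially the same route as the paper, which simply invokes the change-of-numeraire identity~\eqref{change of numeraire} together with the standard fact (Lemma~4.3 in F\"{o}llmer and Schied~(2011)) that cash-additive risk measures on~$\cX$ are finite and $1$-Lipschitz. You have merely filled in the bookkeeping details---checking that~$\widetilde\cA$ is an acceptance set and exhibiting the explicit Lipschitz constant $S_0/\e$---that the paper leaves implicit.
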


\smallskip

\begin{remark}
\label{risk-free asset}
Note that in case~$\Omega$ is finite,~$S_T$ is bounded away from zero whenever $S_T(\omega)>0$ for all $\omega\in\Omega$. It follows that for finite~$\Omega$ the risk measure~$\rho_{\cA,S}$ is finitely valued for any acceptance set~$\cA$ and any eligible asset~$S$ with everywhere positive payoff. This is the setting in Artzner et al.\,(1999).
\end{remark}

\smallskip

As we will see throughout this paper, risk measures with respect to a general eligible asset display a much wider variety of behaviors in terms of finiteness and continuity than cash-additive risk measures.


\section{Finiteness of risk measures}
\label{finiteness}

Given an acceptance set~$\cA$ and an eligible asset $S=(S_0,S_T)$ it is natural to ask when $\rho_{\cA,S}$ is finitely valued, i.e. when $-\infty<\rho_{\cA,S}(X)<\infty$ holds for every $X\in\cX$. This is not only an interesting mathematical question but has also an economic significance. Indeed, if a position~$X$ is such that $\rho_{\cA,S}(X)=\infty$, then it cannot be made acceptable by raising any amount of capital and investing it in the eligible asset. This suggests that the eligible asset is not ``effective'' for this position. On the other hand, if $\rho_{\cA,S}(X)=-\infty$, then we can borrow arbitrary amounts of capital by shorting the eligible asset while retaining the acceptability of $X$. This might be called {\em acceptability arbitrage} and is related to the same concept introduced in the context of multiple eligible assets in Artzner et al.\,(2009) (see in particular Assumption $NAA(\cA,\cS)$ and Remark~1 following Proposition~2 in that paper).


\subsection{The geometry of finiteness}

The following basic result is useful when establishing whether a capital requirement is finitely valued or not. We will use it without further reference. The proof is a direct consequence of the definition of a capital requirement.

\begin{proposition}
\label{finiteness lemma}
Let $\cA\subset\cX$ be an acceptance set, and $S=(S_0,S_T)$ a traded asset. For $X\in\cX$ the following statements hold:
\begin{enumerate}[(i)]
	\item $\rho_{\cA,S}(X)<\infty$ if and only if there exists $\lambda_0\in\R\cup\{-\infty\}$ with $X+\lambda S_T\in\cA$ for any $\lambda>\lambda_0$;
	\item $\rho_{\cA,S}(X)>-\infty$ if and only if there exists $\lambda_0\in\R\cup\{\infty\}$ with $X+\lambda S_T\notin\cA$ for any~$\lambda<\lambda_0$.
\end{enumerate}
\end{proposition}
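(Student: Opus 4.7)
The approach is to unpack Definition \ref{risk measure single elig def} directly, using monotonicity (A2) of $\cA$ together with the fact that $S_T\geq 0$ and $S_0>0$. Let me abbreviate
\[
M(X):=\left\{m\in\R \,; \ X+\frac{m}{S_0}S_T\in\cA\right\},
\]
so that $\rho_{\cA,S}(X)=\inf M(X)$. The key structural observation, which I would establish first, is that $M(X)$ is \emph{upward closed}: if $m\in M(X)$ and $m'>m$, then $m'\in M(X)$. This is immediate because $X+\tfrac{m'}{S_0}S_T\geq X+\tfrac{m}{S_0}S_T$ (since $S_T\geq 0$ and $m'>m$), so (A2) applied to the acceptable position $X+\tfrac{m}{S_0}S_T$ yields $X+\tfrac{m'}{S_0}S_T\in\cA$. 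Consequently $M(X)$ is either empty, all of $\R$, or a half-line unbounded above with endpoint $\inf M(X)$.

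For~\textit{(i)}, note that $\rho_{\cA,S}(X)<\infty$ is equivalent to $M(X)\neq\emptyset$. Assuming this, set $\lambda_0:=\inf M(X)/S_0\in\R\cup\{-\infty\}$. For any $\lambda>\lambda_0$ the number $\lambda S_0$ lies strictly above $\inf M(X)$, so by definition of infimum together with upward closedness we obtain $\lambda S_0\in M(X)$, i.e., $X+\lambda S_T\in\cA$. Conversely, if some $\lambda_0\in\R\cup\{-\infty\}$ with the stated property exists, then picking any $\lambda>\lambda_0$ gives $\lambda S_0\in M(X)$, so $\inf M(X)\leq \lambda S_0<\infty$, proving $\rho_{\cA,S}(X)<\infty$.

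Part~\textit{(ii)} is handled symmetrically by looking at the complement of $M(X)$. The condition $\rho_{\cA,S}(X)>-\infty$ is equivalent to $\inf M(X)>-\infty$, which (including the case $M(X)=\emptyset$, in which the infimum is $+\infty$) is equivalent to $\R\setminus M(X)\supset (-\infty,\inf M(X))$. Setting $\lambda_0:=\inf M(X)/S_0\in\R\cup\{\infty\}$, every $\lambda<\lambda_0$ satisfies $\lambda S_0<\inf M(X)$ and hence $\lambda S_0\notin M(X)$, i.e., $X+\lambda S_T\notin\cA$. The converse again follows by observing that the existence of such a $\lambda_0$ forces $\inf M(X)\geq \lambda_0 S_0>-\infty$.

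There is no real obstacle here: the statement is essentially a reformulation of the definition of an infimum combined with the upward-closedness induced by (A2), which is why the authors describe the proof as immediate. The only technical care needed is the consistent handling of the extended-real endpoints $\lambda_0=-\infty$ in~\textit{(i)} and $\lambda_0=+\infty$ in~\textit{(ii)}, corresponding to the boundary cases $M(X)=\R$ and $M(X)=\emptyset$ respectively.
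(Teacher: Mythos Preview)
Your argument is correct and is exactly the intended one: the paper omits the proof, stating only that it ``is a direct consequence of the definition of a capital requirement,'' and your write-up supplies precisely those details via the upward-closedness of $M(X)$ under (A2) and the standard infimum reasoning. There is nothing to add.
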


\medskip

The set of examples below shows that if we depart from the usual cash-additive framework and we allow for general eligible assets, capital requirements are no longer automatically finite. In particular, it is not true that any financial position can be made acceptable by adding to it a suitable amount of capital invested in the eligible asset.

\smallskip

\begin{example}
\label{non reachability example}
In all the following examples $S=(S_0,S_T)$ denotes the eligible asset.
\begin{enumerate}[(i)]
  \item {\em A risk measure with range $\R\cup\{\infty\}$.} Set $\cA:=\cX_+$. Then, if~$S_T$ is not bounded away from zero, it is easy to see that $\rho_{\cA,S}$ cannot assume the value~$-\infty$. However, $\rho_{\cA,S}(-1_\Omega)=\infty$ and $\rho_{\cA,S}(S_T)=-S_0$.

Note that this implies that $\rho_{\cA,S}$ is finitely valued if and only if the payoff~$S_T$ is bounded away from zero.

  \item {\em A risk measure with range $\R\cup\{-\infty\}$.} Let~$A$ and~$B$ be disjoint nonempty measurable subsets of~$\Omega$ such that $A\cup B=\Omega$. Set $\cA:=\{X\in\cX \,; \ X1_A \ge 0 \ \text{or} \ X1_B>0\}$ and $S_T:=1_B$. Then $\rho_{\cA,S}$ cannot assume the value~$\infty$. However, $\rho_{\cA,S}(1_A)=-\infty$ and $\rho_{\cA,S}(-1_\Omega)=S_0$.

  \item {\em A risk measure with range $\overline{\R}$.} Let~$\cA$ be as in the previous example but assume~$C$ is a nonempty measurable proper subset of~$B$. Set $S_T:=1_C$. Then we have $\rho_{\cA,S}(1_A)=-\infty$, $\rho_{\cA,S}(-1_\Omega)=\infty$, and $\rho_{\cA,S}(-1_A-1_C)=S_0$.

  \item {\em A risk measure with range $\{-\infty,\infty\}$ (even though~$S_T$ is everywhere strictly positive).} Assume $Z\in\cX$ is everywhere strictly positive but not bounded away from zero. Consider the acceptance set $\cA:=\{X\in\cX \,; \ \exists \ \lambda\in\R \,:\, X\geq\lambda Z\}$. Assume $S_T\leq\lambda_0 Z$ for some $\lambda_0>0$. Then, $\rho_{\cA,S}(X)=-\infty$ whenever $X\in\cA$ and $\rho_{\cA,S}(X)=\infty$ whenever $X\not\in\cA$.
\end{enumerate}
\end{example}


\subsection{Conic and coherent capital requirements}

When the acceptance set is a cone we can characterize those eligible assets for which the resulting capital requirements are finitely valued.

\begin{theorem}
\label{conic finiteness}
Let $\cA\subset\cX$ be a conic acceptance set and $S=(S_0,S_T)$ a traded asset. The following statements hold:
\begin{enumerate}[(i)]
	\item $\rho_{\cA,S}(X)<\infty$ for all $X\in\cX$ if and only if $S_T\in\Interior(\cA)$;
	\item $\rho_{\cA,S}(X)>-\infty$ for all $X\in\cX$ if and only if $-S_T\notin\overline{\cA}$.
\end{enumerate}
\end{theorem}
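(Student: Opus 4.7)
The plan is to leverage the characterizations of interior and closure of acceptance sets provided by Lemma \ref{properties acceptance sets}(ii) and (iii), namely $X\in\Interior(\cA)$ iff $X-\e 1_\Omega\in\cA$ for some $\e>0$, and $X\notin\overline{\cA}$ iff $X+\e 1_\Omega\notin\cA$ for some $\e>0$. Combined with the conicity of $\cA$, which (for $\mu>0$) means $X\in\cA$ iff $\mu X\in\cA$, these tools let us translate questions about membership of $X+\lambda S_T$ in $\cA$ into questions about membership of $S_T\pm\e 1_\Omega$ in $\cA$ for appropriate scalars $\e$, provided we send $\lambda$ to $\pm\infty$.

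For part (i), if $S_T\in\Interior(\cA)$, pick $\e>0$ with $S_T-\e 1_\Omega\in\cA$. Given $X\in\cX$ and $\lambda\geq\|X\|/\e$, conicity yields $\lambda S_T-\lambda\e 1_\Omega\in\cA$; since $X+\lambda S_T\geq \lambda S_T-\|X\|1_\Omega\geq \lambda S_T-\lambda\e 1_\Omega$, monotonicity gives $X+\lambda S_T\in\cA$, hence $\rho_{\cA,S}(X)<\infty$. For the converse, assuming $S_T\notin\Interior(\cA)$, I propose to exhibit $\rho_{\cA,S}(-1_\Omega)=\infty$. For $\lambda>0$, writing $-1_\Omega+\lambda S_T=\lambda(S_T-(1/\lambda)1_\Omega)$ and applying conicity together with the hypothesis $S_T-(1/\lambda)1_\Omega\notin\cA$ does the job. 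For $\lambda\leq 0$, one has $-1_\Omega+\lambda S_T\leq -1_\Omega$, and the key fact $-1_\Omega\notin\cA$ follows from axiom (A1): if $-1_\Omega\in\cA$, then by conicity $-\mu 1_\Omega\in\cA$ for every $\mu>0$, and monotonicity would then force $\cA=\cX$.

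For part (ii), if $-S_T\notin\overline{\cA}$, Lemma \ref{properties acceptance sets}(iii) yields $\e>0$ with $-S_T+\e 1_\Omega\notin\cA$. For any $X\in\cX$ and $\lambda<-\|X\|/\e$, one has $X+\lambda S_T\leq |\lambda|\bigl((\|X\|/|\lambda|)1_\Omega-S_T\bigr)$, and the bracketed expression is dominated by $\e 1_\Omega-S_T\notin\cA$; monotonicity rules out the bracketed term from $\cA$, conicity rules out the whole right-hand side from $\cA$, and another application of monotonicity gives $X+\lambda S_T\notin\cA$. This forces $\rho_{\cA,S}(X)\geq -\|X\|S_0/\e>-\infty$. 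Conversely, if $-S_T\in\overline{\cA}$, then by the contrapositive of Lemma \ref{properties acceptance sets}(iii) we have $-S_T+\delta 1_\Omega\in\cA$ for every $\delta>0$; taking $X=1_\Omega$ and $\lambda<0$, the identity $1_\Omega+\lambda S_T=|\lambda|\bigl(-S_T+(1/|\lambda|)1_\Omega\bigr)$ combined with conicity places $1_\Omega+\lambda S_T\in\cA$. Letting $\lambda\to-\infty$ yields $\rho_{\cA,S}(1_\Omega)=-\infty$.

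The only step that really requires care, and which I anticipate as the mildly subtle point, is establishing $-1_\Omega\notin\cA$ in the converse of (i): this is where the non-triviality axiom (A1) enters the argument and must be combined with conicity to preclude the degenerate situation $\cA=\cX$. Everything else is a routine interplay of monotonicity, conicity, and the two "$\e 1_\Omega$-characterizations" of interior and closure supplied by Lemma \ref{properties acceptance sets}.
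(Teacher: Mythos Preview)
Your proof is correct and follows essentially the same approach as the paper's: both arguments exploit conicity to rescale, use $\pm 1_\Omega$ as the test positions for the converse directions, and appeal to the $\e 1_\Omega$-characterizations of $\Interior(\cA)$ and $\overline{\cA}$ from Lemma~\ref{properties acceptance sets}. The only stylistic difference is that the paper argues the forward directions slightly more abstractly (perturbing $S_T$ by $\lambda X$ directly inside the open set $\Interior(\cA)$, then rescaling), whereas you route everything explicitly through monotonicity and the $\e 1_\Omega$ bounds; your handling of the case $\lambda\le 0$ in the converse of~(i) and the verification that $-1_\Omega\notin\cA$ are correct refinements that the paper leaves implicit.
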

\begin{proof}
\textit{(i)} Assume that $S_T\in\Interior(\cA)$, and take an arbitrary $X\in\cX$. Then for small $\lambda>0$ we have $S_T+\lambda X\in\cA$. Since~$\cA$ is a cone we immediately obtain $X+\frac{1}{\lambda}S_T\in\cA$, implying $\rho_{\cA,S}(X)<\infty$. Assume now~$\rho_{\cA,S}$ does not take the value~$\infty$. Then, in particular, there exists $\lambda>0$ such that $-1_\Omega +\lambda S_T\in\cA$. Since~$\cA$ is a cone we get $S_T-\frac{1}{\lambda}1_\Omega \in\cA$, which by Lemma~\ref{properties acceptance sets} yields $S_T\in\Interior(\cA)$.

\smallskip

\textit{(ii)} Assume that $-S_T\notin\overline{\cA}$ and let $X\in\cX$. Then there exists $\lambda>0$ such that $-S_T+\lambda X\notin\cA$. As a result $X-\frac{1}{\lambda}S_T\notin\cA$, showing that $\rho_{\cA,S}(X)>-\infty$. Conversely, assume $\rho_{\cA,S}$ is never~$-\infty$. Then we can find $\lambda>0$ such that $1_\Omega-\lambda S_T\notin\cA$. Since~$\cA$ is a cone, this is equivalent to $-S_T+\frac{1}{\lambda}1_\Omega\notin\cA$. Therefore $-S_T\notin\overline{\cA}$ by Lemma~\ref{properties acceptance sets}, concluding the proof.
\end{proof}

\medskip

If the acceptance set is coherent we obtain a stronger characterization.

\begin{corollary}
\label{coherent finiteness}
Let $\cA\subset\cX$ be a coherent acceptance set and $S=(S_0,S_T)$ a traded asset. Then the following statements are equivalent:
\begin{enumerate}[(a)]
  \item $\rho_{\cA,S}$ is finitely valued;
  \item $\rho_{\cA,S}$ does not take the value $\infty$;
  \item $S_T\in\Interior(\cA)$.
\end{enumerate}
\end{corollary}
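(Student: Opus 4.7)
The plan is to reduce the equivalence to Theorem~\ref{conic finiteness}, which already settles the purely conic (non-convex) case. The implication (a)~$\Rightarrow$~(b) is immediate, and (b)~$\Leftrightarrow$~(c) is exactly Theorem~\ref{conic finiteness}(i). Thus the only content is (c)~$\Rightarrow$~(a), and in view of Theorem~\ref{conic finiteness}(ii) this amounts to showing that, under coherence, $S_T \in \Interior(\cA)$ forces $-S_T \notin \overline{\cA}$ as well.

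This is precisely where convexity, not merely conicity, must enter. I would proceed by contradiction: assume $-S_T \in \overline{\cA}$. Coherence makes $\overline{\cA}$ a closed convex cone, and Lemma~\ref{properties acceptance sets}(v) gives $\Interior(\overline{\cA}) = \Interior(\cA)$, so $S_T \in \Interior(\overline{\cA})$. The standard open-segment lemma for convex sets --- namely, the open segment joining an interior point of a convex set to any other point of its closure lies in the interior --- applied to $S_T$ and $-S_T$ at the midpoint then yields
\begin{equation*}
0 \;=\; \tfrac{1}{2} S_T + \tfrac{1}{2}(-S_T) \;\in\; \Interior(\overline{\cA}) \;=\; \Interior(\cA).
\end{equation*}

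From $0 \in \Interior(\cA)$, Lemma~\ref{properties acceptance sets}(ii) supplies some $\e>0$ with $-\e 1_\Omega \in \cA$. Conicity upgrades this to $-\lambda 1_\Omega \in \cA$ for every $\lambda > 0$, and monotonicity (A2) then makes every $X \in \cX$ acceptable by choosing $\lambda \geq \|X\|$ and using $X \geq -\lambda 1_\Omega$. This contradicts the non-triviality axiom (A1), completing the argument.

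The main obstacle, or rather the key conceptual point, is identifying where coherence is essential as opposed to mere conicity: for a general conic acceptance set, $S_T \in \Interior(\cA)$ and $-S_T \in \overline{\cA}$ are a priori compatible, so Theorem~\ref{conic finiteness}(i) alone does not yield the equivalence of (b) and (a). It is convexity that collapses these two into $0 \in \Interior(\cA)$, thereby violating non-triviality and producing the required strengthening.
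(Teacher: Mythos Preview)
Your proof is correct and follows essentially the same approach as the paper: both reduce to Theorem~\ref{conic finiteness} and establish (c)~$\Rightarrow$~(a) by showing via the open-segment lemma for convex sets that $S_T\in\Interior(\cA)$ together with $-S_T\in\overline{\cA}$ would force $0\in\Interior(\cA)$, hence $\cA=\cX$, contradicting non-triviality. The only cosmetic difference is that you pass through $\overline{\cA}$ and invoke Lemma~\ref{properties acceptance sets}(v) before applying the segment lemma, whereas the paper applies it directly to~$\cA$ (the standard formulation already allows one endpoint in the closure), and you spell out the step from $0\in\Interior(\cA)$ to $\cA=\cX$ via Lemma~\ref{properties acceptance sets}(ii) and monotonicity while the paper states it in one line.
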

\begin{proof}
Assume $S_T\in\Interior(\cA)$. Since~$\cA$ is convex, Lemma~5.28 in Aliprantis and Border~(2006) implies that if $-S_T\in\overline{\cA}$ we have $-\lambda S_T+(1-\lambda)S_T\in\Interior(\cA)$ for every $\lambda\in(0,1)$. In particular, $0\in\Interior(\cA)$. Since~$\cA$ is a cone, this implies that $\cA=\cX$ which is not possible because~$\cA$ is an acceptance set. The result now follows directly from Theorem~\ref{conic finiteness}.
\end{proof}

\begin{remark}
Part~(i) in Example~\ref{non reachability example} shows that if $\rho_{\cA,S}$ does not assume the value~$-\infty$, then $\rho_{\cA,S}$ need not be finitely valued. Note that the acceptance set in~(ii) and~(iii) in Example~\ref{non reachability example} are conic but not convex, showing that convexity is essential in Corollary~\ref{coherent finiteness}.
\end{remark}


\subsection{$\VaR$- and $\TVaR$-based capital requirements}

As direct applications of Theorem~\ref{conic finiteness} and Corollary~\ref{coherent finiteness}, we provide a characterization of when capital requirements based on $\VaR$- and $\TVaR$-acceptability are finite. Recall that the corresponding acceptance sets~$\cA_\alpha$ and~$\cA^\alpha$ have been introduced in Example~\ref{var tvar example}.

\medskip

\begin{proposition}[$\VaR$-acceptance]
\label{ruin and eligibility}
Let $S=(S_0,S_T)$ be a traded asset and $\alpha\in(0,1)$.
\begin{enumerate}[(i)]
	\item The following statements are equivalent:
\begin{enumerate}[(a)]
  \item $\rho_{\cA_\alpha,S}(X)<\infty$ for every $X\in\cX$;
  \item $\VaR_\alpha(S_T)<0$;
  \item $\probp[S_T<\lambda]\leq\alpha$ for some $\lambda>0$.
\end{enumerate}
In the case that $\Omega$ is finite, the above statements are equivalent to $\probp[S_T=0]\leq\alpha$.

	\item The following statements are equivalent:
\begin{enumerate}[(a)]
  \item $\rho_{\cA_\alpha,S}(X)>-\infty$ for every $X\in\cX$;
  \item $\VaR_\alpha(-S_T)>0$;
  \item $\probp[S_T>0]>\alpha$ ($\probp[S_T=0]<1-\alpha$).
\end{enumerate}
\end{enumerate}
\end{proposition}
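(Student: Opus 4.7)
The plan is to read the statement as a direct corollary of Theorem~\ref{conic finiteness}, using the fact that $\cA_\alpha$ is a (closed) cone and the explicit descriptions of $\cA_\alpha$ and its interior given in Example~\ref{var tvar example}. Recall that
\begin{equation*}
\cA_\alpha=\{X\in\cX \,;\, \VaR_\alpha(X)\leq 0\}, \qquad \Interior(\cA_\alpha)=\{X\in\cX \,;\, \VaR_\alpha(X)<0\}, \qquad \overline{\cA_\alpha}=\cA_\alpha.
\end{equation*}

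For part \textit{(i)}, I would invoke Theorem~\ref{conic finiteness}\,\textit{(i)}, which gives the equivalence $\rho_{\cA_\alpha,S}(X)<\infty$ for all $X\in\cX$ if and only if $S_T\in\Interior(\cA_\alpha)$. Plugging in the description of the interior immediately identifies this with $\VaR_\alpha(S_T)<0$, proving \textit{(a)}$\iff$\textit{(b)}. For \textit{(b)}$\iff$\textit{(c)}, I would simply unwind the definition of $\VaR_\alpha$: the condition $\VaR_\alpha(S_T)<0$ says that some $m<0$ satisfies $\probp[S_T+m<0]\leq\alpha$, equivalently $\probp[S_T<\lambda]\leq\alpha$ with $\lambda:=-m>0$. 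For finite $\Omega$, the payoff $S_T$ takes only finitely many values, so the set of its positive values admits a strictly positive minimum; choosing $\lambda$ below this minimum turns $\probp[S_T<\lambda]\leq\alpha$ into $\probp[S_T=0]\leq\alpha$, and the converse implication is trivial since $\{S_T=0\}\subset\{S_T<\lambda\}$ for any $\lambda>0$.

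For part \textit{(ii)}, I would similarly apply Theorem~\ref{conic finiteness}\,\textit{(ii)}, which states that $\rho_{\cA_\alpha,S}(X)>-\infty$ for all $X\in\cX$ if and only if $-S_T\notin\overline{\cA_\alpha}$. Since $\cA_\alpha$ is closed this amounts to $-S_T\notin\cA_\alpha$, i.e.\ $\VaR_\alpha(-S_T)>0$, which establishes \textit{(a)}$\iff$\textit{(b)}. For \textit{(b)}$\iff$\textit{(c)}, I would again unwind the definition: $\VaR_\alpha(-S_T)>0$ is equivalent to $\probp[-S_T<0]>\alpha$, hence to $\probp[S_T>0]>\alpha$; the reformulation in terms of $\probp[S_T=0]$ follows from $S_T\geq 0$, giving $\probp[S_T>0]=1-\probp[S_T=0]$.

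There is no real obstacle here: the proof is essentially a bookkeeping exercise that translates the geometric conditions of Theorem~\ref{conic finiteness} into the probabilistic language of $\VaR$. The only mildly delicate step is the finite-$\Omega$ specialization in \textit{(i)}, where one must be careful that the event $\{S_T<\lambda\}$ collapses to $\{S_T=0\}$ precisely because the set of positive values of $S_T$ is finite and therefore separated from zero.
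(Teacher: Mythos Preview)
Your proposal is correct and follows essentially the same approach as the paper's own proof: both invoke Theorem~\ref{conic finiteness} to translate finiteness into the geometric conditions $S_T\in\Interior(\cA_\alpha)$ and $-S_T\notin\overline{\cA_\alpha}$, identify these with the $\VaR$ inequalities via the description in Example~\ref{var tvar example}, and then unwind the definition of $\VaR_\alpha$ to obtain the probabilistic conditions~\textit{(c)}. Your write-up is simply more explicit than the paper's, which dispatches \textit{(b)}$\iff$\textit{(c)} and the finite-$\Omega$ case in a single sentence each.
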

\begin{proof}
Note that $\VaR_\alpha(S_T)<0$ is equivalent to $S_T\in\Interior(\cA_\alpha)$ and $\VaR_\alpha(-S_T)>0$ is equivalent to $-S_T\notin\overline{\cA_\alpha}$. Hence the equivalence of~(a) and~(b) in parts~(i) and~(ii) follows from Theorem~\ref{conic finiteness} since~$\cA_\alpha$ is a conic acceptance set. The equivalence of~(b) and~(c) in parts~(i) and~(ii) follows directly from the definition of Value-at-Risk. Clearly, if~$\Omega$ is finite then~\textit{(c)} is equivalent to $\probp[S_T=0]\leq\alpha$, concluding the proof.
\end{proof}

\begin{remark}
\label{remark var finiteness}
Condition~(c) in parts~(i) and~(ii) of Proposition~\ref{ruin and eligibility} are quite intuitive. We interpret them when $S=(S_0,S_T)$ represents a defaultable bond with price $S_0\in(0,1)$, face value~$1_\Omega$ and payoff~$S_T$ with $0\leq S_T \leq 1_\Omega$.

\smallskip

Condition~(c) in part~(i) implies that $\probp[S_T=0]\le \alpha$. Note that the set $\{S_T=0\}$ is precisely the set of future states on which the bond defaults fully. Hence, raising capital and investing it in the bond has no impact whatsoever on the capital position in those states. Therefore, if~$\rho_{\cA_\alpha,S}$ is not to assume the value~$\infty$, the probability of zero recovery cannot exceed~$\alpha$. Similarly, the set $\{S_T>0\}$ is precisely the set of future states where the bond pays a strictly positive amount. Raising and investing capital in the bond does have an impact on the capital position in those states. Therefore, if~$\rho_{\cA_\alpha,S}$ is not to assume the value~$-\infty$, the probability of this set needs to be greater than~$\alpha$.
\end{remark}

\medskip

As a corollary, we provide a sharper characterization of when the risk measure~$\rho_{\cA_\alpha,S}$ based on $\VaR$-acceptability is finite in case $\alpha\in\left(0,\frac{1}{2}\right)$. Note that this situation is most relevant for applications.

\begin{corollary}
\label{corollary finiteness var alpha less 0.5}
Let $S=(S_0,S_T)$ be a traded asset and take $\alpha\in\left(0,\frac{1}{2}\right)$. Then are equivalent:
\begin{enumerate}[(a)]
	\item $\rho_{\cA_\alpha,S}$ is finitely valued;
	\item $\VaR_\alpha(S_T)<0$;
  \item $\probp[S_T< \lambda]\leq\alpha$ for some $\lambda>0$.
\end{enumerate}
In the case that $\Omega$ is finite, the above statements are equivalent to $\probp[S_T=0]\leq\alpha$.
\end{corollary}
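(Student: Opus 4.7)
The plan is to derive this corollary directly from Proposition~\ref{ruin and eligibility} by observing that, when $\alpha<\nicefrac{1}{2}$, the upper finiteness condition automatically entails the lower finiteness condition. Recall that Proposition~\ref{ruin and eligibility} already yields the equivalence of (b) and (c), and the implication $\text{(a)}\Rightarrow\text{(b)}$ (since finitely valued implies $\rho_{\cA_\alpha,S}(X)<\infty$ everywhere, which by part~(i) is equivalent to~(b)). The only nontrivial implication to establish is therefore $\text{(c)}\Rightarrow\text{(a)}$.

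For this, I would start by assuming (c), so that $\probp[S_T<\lambda]\leq\alpha$ for some $\lambda>0$. By Proposition~\ref{ruin and eligibility}(i), this already gives $\rho_{\cA_\alpha,S}(X)<\infty$ for every $X\in\cX$, so it only remains to rule out the value $-\infty$. The key observation is the elementary inequality
\begin{equation}
\probp[S_T>0]\,\geq\,\probp[S_T\geq\lambda]\,=\,1-\probp[S_T<\lambda]\,\geq\,1-\alpha\,,
\end{equation}
and the assumption $\alpha<\nicefrac{1}{2}$ forces $1-\alpha>\alpha$, hence $\probp[S_T>0]>\alpha$. By Proposition~\ref{ruin and eligibility}(ii) this is precisely the condition for $\rho_{\cA_\alpha,S}(X)>-\infty$ for every $X\in\cX$. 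Combining the two one-sided bounds gives~(a).

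Finally, for the last assertion in the finite-$\Omega$ case, note that when $\Omega$ is finite the equivalence $\probp[S_T<\lambda]\leq\alpha$ for some $\lambda>0$ $\iff$ $\probp[S_T=0]\leq\alpha$ is already recorded in Proposition~\ref{ruin and eligibility}(i); indeed, if $S_T$ takes only finitely many values, then $\{S_T<\lambda\}=\{S_T=0\}$ for all sufficiently small $\lambda>0$.

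I do not expect any genuine obstacle here, since the corollary is essentially a repackaging of the previous proposition under the additional hypothesis $\alpha<\nicefrac{1}{2}$; the only substantive point is the probabilistic observation $1-\alpha>\alpha$, which makes the ``no $-\infty$'' condition automatic from the ``no $+\infty$'' one.
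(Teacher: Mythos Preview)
Your proposal is correct and follows essentially the same approach as the paper's own proof: both derive (b)$\Leftrightarrow$(c) and (a)$\Rightarrow$(b) from Proposition~\ref{ruin and eligibility}, and both establish (c)$\Rightarrow$(a) via the chain $\probp[S_T>0]\geq\probp[S_T\geq\lambda]\geq 1-\alpha>\alpha$ together with part~(ii) of that proposition, with the finite-$\Omega$ addendum handled identically.
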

\begin{proof}
By Proposition~\ref{ruin and eligibility} we know that~\textit{(a)} is a sufficient condition for~\textit{(b)} which is equivalent to~\textit{(c)}. Assume~\textit{(c)} holds. Then, part~(i) of Proposition~\ref{ruin and eligibility} implies that~$\rho_{\cA,S}$ cannot assume the value~$\infty$. Moreover, we have $\probp[S_T>0]\geq\probp[S_T\geq\lambda]\geq 1-\alpha>\alpha$. Hence, by part~(ii) of Proposition~\ref{ruin and eligibility}, the risk measure~$\rho_{\cA,S}$ cannot assume the value~$-\infty$. This proves that~\textit{(c)} implies~\textit{(a)}, concluding the proof.
\end{proof}

\medskip

\begin{proposition}[$\TVaR$-acceptance]
\label{TVaR-acceptance}
Let $S=(S_0,S_T)$ be a traded asset and $\alpha\in(0,1)$. The following statements are equivalent:
\begin{enumerate}[(a)]
	\item $\rho_{\cA^\alpha,S}$ is finitely valued;
	\item $\TVaR_\alpha(S_T)<0$;
	\item $\probp[S_T=0]<\alpha$.
\end{enumerate}
\end{proposition}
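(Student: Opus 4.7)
The plan is to reduce the equivalence (a)$\iff$(b) to the already-proved Corollary~\ref{coherent finiteness}, and then establish the probabilistic equivalence (b)$\iff$(c) by analyzing the integrand of $\TVaR$.

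For (a)$\iff$(b): Example~\ref{var tvar example} records that $\cA^\alpha$ is a closed coherent acceptance set and that $\Interior(\cA^\alpha)=\{X\in\cX \,;\, \TVaR_\alpha(X)<0\}$. Since $\cA^\alpha$ is coherent, Corollary~\ref{coherent finiteness} gives that $\rho_{\cA^\alpha,S}$ is finitely valued if and only if $S_T\in\Interior(\cA^\alpha)$, which is precisely the condition $\TVaR_\alpha(S_T)<0$.

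For (b)$\iff$(c), the key observation is that since $S_T\geq0$, we have $\VaR_\beta(S_T)\leq0$ for every $\beta\in(0,1)$ and the map $\beta\mapsto\VaR_\beta(S_T)$ is non-increasing. Consequently, from $\TVaR_\alpha(S_T)=\tfrac{1}{\alpha}\int_0^\alpha\VaR_\beta(S_T)\,d\beta$ one checks that $\TVaR_\alpha(S_T)<0$ is equivalent to the existence of some $\beta_0\in(0,\alpha)$ with $\VaR_{\beta_0}(S_T)<0$: the implication ($\Leftarrow$) follows because on the subinterval $(\beta_0,\alpha)$ the integrand stays $\leq\VaR_{\beta_0}(S_T)<0$, while the rest of the integrand is $\leq0$; the implication ($\Rightarrow$) is immediate since otherwise the whole integrand would vanish.

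Now $\VaR_{\beta_0}(S_T)<0$ unwinds, by definition, to the existence of $\lambda>0$ with $\probp[S_T<\lambda]\leq\beta_0$, so (b) is equivalent to the existence of $\lambda>0$ with $\probp[S_T<\lambda]<\alpha$. Finally, the sets $\{S_T<\lambda\}$ decrease to $\{S_T=0\}$ as $\lambda\searrow0$ (using $S_T\geq0$), so by continuity of probability from above, $\probp[S_T<\lambda]\downarrow\probp[S_T=0]$; thus such a $\lambda$ exists if and only if $\probp[S_T=0]<\alpha$, which is (c). The main technical point is the monotonicity-based argument linking the sign of $\TVaR_\alpha(S_T)$ to the existence of a $\beta_0<\alpha$ making $\VaR_{\beta_0}(S_T)$ strictly negative; everything else is definition-chasing and a standard continuity of measure.
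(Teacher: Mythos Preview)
Your proof is correct and follows essentially the same approach as the paper: the equivalence (a)$\iff$(b) is reduced to Corollary~\ref{coherent finiteness} via $\Interior(\cA^\alpha)=\{\TVaR_\alpha<0\}$, and (b)$\iff$(c) is obtained by exploiting that $\VaR_\beta(S_T)\leq0$ together with the monotonicity of $\beta\mapsto\VaR_\beta(S_T)$ and continuity of measure for the sets $\{S_T<\lambda\}$. The only cosmetic difference is that you package (b)$\iff$(c) as a single chain of equivalences through the intermediate condition ``$\exists\,\beta_0\in(0,\alpha)$ with $\VaR_{\beta_0}(S_T)<0$'', whereas the paper argues the two implications separately.
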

\begin{proof}
Note that $\TVaR_\alpha(S_T)<0$ is equivalent to $S_T\in\Interior(\cA^\alpha)$. The equivalence between~\textit{(a)} and~\textit{(b)} thus follows immediately from Corollary~\ref{coherent finiteness} since~$\cA^\alpha$ is a coherent acceptance set.

\smallskip

Now assume~\textit{(b)} holds, so that $\int^{\alpha}_{0}\VaR_\beta(S_T)d\beta<0$. Since $S_T\geq0$ we have $\VaR_\beta(S_T)\leq0$ for every $\beta\in(0,\alpha)$, and therefore there must exist $\gamma\in(0,\alpha)$ such that $\VaR_\gamma(S_T)<0$. Hence, we find $\lambda>0$ for which $\probp[S_T=0]\leq\probp[S_T<\lambda]\leq\gamma<\alpha$, proving~\textit{(c)}.

\smallskip

Conversely, assume~\textit{(c)} is fulfilled. Then we can find $\lambda>0$ such that $\probp[S_T<\lambda]<\alpha$. Setting $\gamma:=\probp[S_T<\lambda]$ we have $\VaR_\gamma(S_T)\leq-\lambda<0$. This implies that $\VaR_\beta(S_T)<0$ for every $\beta\in(\gamma,\alpha)$, which immediately yields $\TVaR_\alpha(S_T)<0$, concluding the proof.
\end{proof}

\smallskip

\begin{remark}
\label{remark on tvar finiteness}
As for Proposition~\ref{ruin and eligibility}, the interpretation of Proposition~\ref{TVaR-acceptance} is straightforward when~$S$ represents a defaultable bond. In this case, the corresponding $\TVaR$-based capital requirements are finitely valued if and only if the probability of zero recovery is strictly smaller than~$\alpha$.
\end{remark}


\section{Continuity properties of risk measures}
\label{continuity section}

In this section we investigate continuity properties of risk measures. The continuity of capital requirements is important because it gives an indication of the ``stability'' of the required capital figure with respect to small perturbations of the capital position, i.e. of the reliability of required capital if we only know the capital position approximately.

\medskip

We know from Proposition~\ref{finiteness baz} that, whenever the payoff of the eligible asset~$S$ is bounded away from zero, the risk measure~$\rho_{\cA,S}$ is finitely valued and Lipschitz continuous for every acceptance set~$\cA$. The next example shows that a capital requirement may be finitely valued and continuous even if the payoff of the eligible asset is not bounded away from zero.

\smallskip

\begin{example}
Let~$\probp$ be a probability measure on $(\Omega,\cF)$ and $\alpha\in\R$, and consider the acceptance set
\begin{equation}
\cA:=\{X\in\cX \,; \ \E[X]\geq\alpha\}\,.
\end{equation}
Then~$\rho_{\cA,S}$ is finitely valued and continuous for any traded asset $S=(S_0,S_T)$ with payoff $S_T$ such that $\probp[S_T>0]>0$. Indeed, we have $\rho_{\cA,S}(X)=S_0\frac{\alpha-\E[X]}{\E[S_T]}$ for all~$X\in\cX$.
\end{example}

\medskip

Using standard arguments from convex analysis it is easy to see that finite capital requirements based on convex acceptance sets are automatically Lipschitz continuous.

\begin{theorem}
\label{convexity and continuity}
Let $\cA\subset\cX$ be a convex acceptance set and $S=(S_0,S_T)$ a traded asset. If~$\rho_{\cA,S}$ is finitely valued, then it is locally Lipschitz continuous. Moreover, if~$\cA$ is coherent and~$\rho_{\cA,S}$ is finitely valued, then~$\rho_{\cA,S}$ is globally Lipschitz continuous.
\end{theorem}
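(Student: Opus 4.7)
The plan is to proceed in three steps: (1) establish convexity of $\rho_{\cA,S}$; (2) show it is locally bounded, and invoke a classical theorem from convex analysis to conclude local Lipschitz continuity; (3) in the coherent case, strengthen the argument using positive homogeneity. The main obstacle is the second step: pointwise finiteness of a convex function on an infinite-dimensional Banach space does not, by itself, imply continuity, so I must use the monotonicity of~$\cA$ in an essential way.

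First I would observe that the convexity of $\cA$, combined with $S$-additivity, transfers to $\rho_{\cA,S}$: given $X,Y\in\cX$ and $\lambda\in[0,1]$, any $m>\rho_{\cA,S}(X)$ and $n>\rho_{\cA,S}(Y)$ yield $X+\frac{m}{S_0}S_T,Y+\frac{n}{S_0}S_T\in\cA$, and convexity gives $\lambda X+(1-\lambda)Y+\frac{\lambda m+(1-\lambda)n}{S_0}S_T\in\cA$, whence $\rho_{\cA,S}(\lambda X+(1-\lambda)Y)\le\lambda m+(1-\lambda)n$. Taking infima establishes convexity of $\rho_{\cA,S}$. Now fix $X_0\in\cX$. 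For any $X$ with $\|X-X_0\|<1$ we have $X_0-1_\Omega\le X\le X_0+1_\Omega$, so by Lemma~\ref{properties proposition}\,\textit{(ii)} (decreasingness) we get the sandwich
\begin{equation*}
\rho_{\cA,S}(X_0+1_\Omega)\le\rho_{\cA,S}(X)\le\rho_{\cA,S}(X_0-1_\Omega)\,.
\end{equation*}
Both bounds are finite by assumption, so $\rho_{\cA,S}$ is bounded on the open ball $B(X_0,1)$. A standard result in convex analysis (see e.g.\ Proposition~I.2.5 in Ekeland--Temam, or Theorem~2.2.9 in Zalinescu) asserts that a proper convex function on a Banach space which is bounded above on a neighborhood of a point is locally Lipschitz on the interior of its effective domain. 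Here the effective domain is all of $\cX$, so $\rho_{\cA,S}$ is locally Lipschitz continuous everywhere.

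For the coherent case, the conic property of $\cA$ yields positive homogeneity of $\rho_{\cA,S}$; together with convexity this gives subadditivity: $\rho_{\cA,S}(U+V)\le\rho_{\cA,S}(U)+\rho_{\cA,S}(V)$ for all $U,V\in\cX$. Set $\kappa:=\rho_{\cA,S}(-1_\Omega)$, which is finite and nonnegative (since $0\in\cA$ as a cone and $-1_\Omega\le 0$ by monotonicity). For arbitrary $X,Y\in\cX$ let $\delta:=\|X-Y\|$, so that $Y\le X+\delta 1_\Omega$. Writing $X=(X+\delta 1_\Omega)+(-\delta 1_\Omega)$ and applying subadditivity together with positive homogeneity yields
\begin{equation*}
\rho_{\cA,S}(X)\le\rho_{\cA,S}(X+\delta 1_\Omega)+\delta\kappa\,.
\end{equation*}
Combining with monotonicity $\rho_{\cA,S}(Y)\ge\rho_{\cA,S}(X+\delta 1_\Omega)$ gives $\rho_{\cA,S}(X)-\rho_{\cA,S}(Y)\le\kappa\|X-Y\|$, and swapping the roles of $X$ and $Y$ produces the reverse inequality. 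Hence $\rho_{\cA,S}$ is globally Lipschitz with constant $\kappa=\rho_{\cA,S}(-1_\Omega)$.
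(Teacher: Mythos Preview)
Your proof is correct and follows essentially the same strategy as the paper for the convex part: both arguments show that the convex function $\rho_{\cA,S}$ is bounded above on some open set (the paper uses $\rho_{\cA,S}\le 0$ on $\cA$ together with $\Interior(\cA)\neq\emptyset$ from Lemma~\ref{properties acceptance sets}, while you use the monotonicity sandwich on every unit ball) and then invoke the standard convex-analysis theorem.

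For the coherent part the two arguments diverge slightly. The paper bootstraps from the already-established local Lipschitz continuity: it extracts a local bound $|\rho_{\cA,S}(X)|\le L\|X\|$ near the origin, extends it to all of $\cX$ by positive homogeneity, and then uses subadditivity to get $\rho_{\cA,S}(X)-\rho_{\cA,S}(Y)\le\rho_{\cA,S}(X-Y)\le L\|X-Y\|$. Your argument is more direct and self-contained: you combine subadditivity with monotonicity along the constant direction $1_\Omega$, obtaining the explicit Lipschitz constant $\kappa=\rho_{\cA,S}(-1_\Omega)$ without appealing to the first part at all. This is a nice feature, since it makes the constant concrete. One small point: your parenthetical justification ``$0\in\cA$ as a cone'' is not quite the right reason for $\kappa\ge 0$; the clean argument is that positive homogeneity and finiteness force $\rho_{\cA,S}(0)=0$, and then monotonicity gives $\kappa=\rho_{\cA,S}(-1_\Omega)\ge\rho_{\cA,S}(0)=0$.
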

\begin{proof}
By Lemma~\ref{properties acceptance sets} the acceptance set~$\cA$ has nonempty interior. Since~$\rho_{\cA,S}$ is a convex map that is bounded from above by~$0$ on~$\cA$, we conclude that~$\rho_{\cA,S}$ is locally Lipschitz continuous by applying Theorem~5.43 and Theorem~5.44 in Aliprantis and Border~(2006).

\smallskip

Assume now that~$\cA$ is coherent. Since~$\rho_{\cA,S}$ is locally Lipschitz continuous and $\rho_{\cA,S}(0)=0$, we find a constant $L>0$ and a neighborhood of~$0$ such that $\left|\rho_{\cA,S}(X)\right|\leq L\left\|X\right\|$ for any~$X$ in that neighborhood. Since~$\rho_{\cA,S}$ is positively homogeneous, the same inequality holds for any $X\in\cX$. Now fix $X,Y\in\cX$. We have $\rho_{\cA,S}(X)\leq\rho_{\cA,S}(X-Y)+\rho_{\cA,S}(Y)$ by subadditivity, and therefore $\rho_{\cA,S}(X)-\rho_{\cA,S}(Y)\leq L\left\|X-Y\right\|$. Exchanging~$X$ and~$Y$ we conclude that~$\rho_{\cA,S}$ is globally Lipschitz continuous.
\end{proof}

\medskip

Let~$\cA$ be an arbitrary acceptance set and~$S$ a traded asset. If~$\cA$ is not convex, then the fact that~$\rho_{\cA,S}$ is finitely valued no longer automatically implies continuity. The next example shows that $\VaR$-based capital requirements may sometimes fail to be continuous, even when finite. This situation is in striking contrast with the standard cash-additive setting, where Value-at-Risk is finite and Lipschitz continuous at every position~$X\in\cX$.

\smallskip

\begin{example}
\label{example no cont of var-based}
Let $\Omega=\{\omega_1,\omega_2,\omega_3\}$ and consider a probability~$\probp$ defined on the power set of~$\Omega$. For simplicity set $p_i:=\probp[\{\omega_i\}]$ for~$i=1,2,3$. Let $\alpha\in\left(0,\frac{1}{3}\right)$, and set $p_1\leq\alpha$, $p_2\in(\alpha-p_1,\alpha]$. Since $\alpha<\frac{1}{3}$, we have $p_3>\alpha$. Then consider a defaultable bond $S=(S_0,S_T)$ with unitary price $S_0:=1$ and payoff~$S_T$ such that $S_T(\omega_1):=0$, $S_T(\omega_2)>0$, and $S_T(\omega_3)>0$. For an arbitrary $X\in\cX$, it is not difficult to show that
\begin{equation}
\rho_{\cA_\alpha,S}(X)=\left\{
\begin{array}{l l}
-\frac{X(\omega_3)}{S_T(\omega_3)} \ \ \ \ \ \ \ \ \ \ \ \ \ \ \ \ \ \ \ \,\quad\mbox{if $X(\omega_1)\geq0$}, \\
-\min\left\{\frac{X(\omega_2)}{S_T(\omega_2)},\frac{X(\omega_3)}{S_T(\omega_3)}\right\} \quad\mbox{otherwise}.
\end{array}
\right.
\end{equation}
Since $p_1=\probp[S_T=0]\leq\alpha$, the risk measure~$\rho_{\cA_\alpha,S}$ is finitely valued by Corollary~\ref{corollary finiteness var alpha less 0.5}. But it is not continuous at $X:=1_{\{\omega_3\}}$. Indeed, the sequence defined by $X_n:=X-\frac{1}{n}1_{\{\omega_1\}}$ converges to~$X$ but $\rho_{\cA_\alpha,S}(X)=-\frac{1}{S_T(\omega_3)}<0$ while $\rho_{\cA_\alpha,S}(X_n)=0$ for all~$n$.
\end{example}


\subsection{The geometry of continuity}
\label{continuity subsection}

Whether or not~$\rho_{\cA,S}$ is continuous will typically depend on the interplay between the acceptance set~$\cA$ and the eligible asset~$S$. In Theorem~\ref{pointwise semicontinuity} below we provide a general pointwise characterization of upper and lower semicontinuity, and thus also continuity.

\smallskip

Recall that a map $\rho:\cX\to\overline{\R}$ is called \textit{lower semicontinuous} at $X\in\cX$ if for every $\e>0$ there exists a neighborhood~$\cU$ of~$X$ such that $\rho(Y)\geq\rho(X)-\e$ for all $Y\in\cU$. We say that~$\rho$ is (globally) lower semicontinuous if it is lower semicontinuous at each point $X\in\cX$. Note that~$\rho$ is lower semicontinuous if and only if the set $\{X\in\cX \,; \ \rho(X)\leq\lambda\}$ is closed for every $\lambda\in\R$. The map~$\rho$ is \textit{upper semicontinuous} at $X\in\cX$ if~$-\rho$ is lower semicontinuous at~$X$ and (globally) upper semicontinuous if~$-\rho$ is lower semicontinuous. Note that~$\rho$ is continuous at $X\in\cX$ if and only it is both lower and upper semicontinuous at~$X$.

\smallskip

Moreover, we write $X_n\downarrow X$ whenever the sequence $(X_n)$ converges to $X$ and is decreasing, i.e. $X_n\geq X_{n+1}$ for all $n$. Similarly, we write $X_n\uparrow X$ if the sequence $(X_n)$ converges to $X$ and is increasing, i.e. $X_n\leq X_{n+1}$ for every $n$. Finally,~$\rho$ is called \textit{norm-continuous from above}, respectively \textit{from below}, at a point~$X$ if $\rho(X_n)\to\rho(X)$ whenever $X_n\downarrow X$, respectively $X_n\uparrow X$. Note that this continuity notion is different from the pointwise notion introduced in Lemma~4.21 in F\"{o}llmer and Schied~(2011).

\medskip

\begin{theorem}[Pointwise semicontinuity]
\label{pointwise semicontinuity}
Let $\cA\subset\cX$ be an acceptance set, $S=(S_0,S_T)$ a traded asset, and take $X\in\cX$.
\begin{enumerate}[(i)]
\item The following statements are equivalent:
\begin{enumerate}[(a)]
	\item $\rho_{\cA,S}$ is lower semicontinuous at~$X$;
	\item $X+\frac{m}{S_0}S_T\notin\overline{\cA}$ for any $m<\rho_{\cA,S}(X)$;
	\item $\rho_{\overline{\cA},S}(X)=\rho_{\cA,S}(X)$;
	\item $\rho_{\cA,S}$ is norm-continuous from above at $X$;
	\item $\rho_{\cA,S}\left(X+\frac{1}{n}1_\Omega\right)\to\rho_{\cA,S}(X)$.
\end{enumerate}
\item The following statements are equivalent:
\begin{enumerate}[(a)]
  \item $\rho_{\cA,S}$ is upper semicontinuous at~$X$;
	\item $X+\frac{m}{S_0}S_T\in\Interior(\cA)$ for any $m>\rho_{\cA,S}(X)$;
	\item $\rho_{\Interior(\cA),S}(X)=\rho_{\cA,S}(X)$;
	\item $\rho_{\cA,S}$ is norm-continuous from below at $X$;
	\item $\rho_{\cA,S}\left(X-\frac{1}{n}1_\Omega\right)\to\rho_{\cA,S}(X)$.
\end{enumerate}
\end{enumerate}
\end{theorem}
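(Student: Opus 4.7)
My plan is to prove part~(i) via the chain (a)$\Leftrightarrow$(b)$\Leftrightarrow$(c) followed by the cycle (c)$\Rightarrow$(d)$\Rightarrow$(e)$\Rightarrow$(b), which yields all five equivalences. Part~(ii) then follows by mirror arguments, with $\Interior(\cA) \subset \cA$ taking the role that $\cA \subset \overline{\cA}$ plays in part~(i), and with the decreasing sequences $X-\frac{1}{n}1_\Omega$ replacing the increasing ones $X+\frac{1}{n}1_\Omega$ throughout.

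The equivalence (b)$\Leftrightarrow$(c) is essentially formal: since $\cA \subset \overline{\cA}$, Definition~\ref{risk measure single elig def} immediately yields $\rho_{\overline{\cA},S}(X) \leq \rho_{\cA,S}(X)$, and the reverse inequality is exactly the statement that no $m < \rho_{\cA,S}(X)$ can make $X+\frac{m}{S_0}S_T$ belong to $\overline{\cA}$. For (a)$\Leftrightarrow$(b), I would use a translation argument. For (b)$\Rightarrow$(a), given $\epsilon>0$ I pick $m$ just below $\rho_{\cA,S}(X)$ (or arbitrarily large if that value is $+\infty$); condition (b) provides an open ball around $X+\frac{m}{S_0}S_T$ disjoint from $\cA$, and translating this ball by $-\frac{m}{S_0}S_T$ gives a neighborhood of $X$ on which $\rho_{\cA,S}(Y) \geq m$. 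Conversely, if $X+\frac{m}{S_0}S_T \in \overline{\cA}$ for some $m<\rho_{\cA,S}(X)$, pick $Z_n \in \cA$ converging to that point and set $Y_n := Z_n-\frac{m}{S_0}S_T$; then $Y_n \to X$ while $\rho_{\cA,S}(Y_n)\leq m$, ruling out lower semicontinuity at $X$.

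To close the cycle, (c)$\Rightarrow$(d) combines the decreasing property of $\rho_{\cA,S}$ from Lemma~\ref{properties proposition} with lower semicontinuity (supplied by (a)): if $X_n \downarrow X$ then $\rho_{\cA,S}(X_n) \leq \rho_{\cA,S}(X)$, while $\liminf_n \rho_{\cA,S}(X_n) \geq \rho_{\cA,S}(X)$, so the sequence converges to $\rho_{\cA,S}(X)$. Step (d)$\Rightarrow$(e) is immediate since $X+\frac{1}{n}1_\Omega \downarrow X$. For the crucial implication (e)$\Rightarrow$(b), I argue by contradiction: if $Y := X+\frac{m}{S_0}S_T \in \overline{\cA}$ for some $m<\rho_{\cA,S}(X)$, then for each $n$ one finds $Z_n \in \cA$ with $\|Z_n-Y\|<\frac{1}{n}$, whence $Z_n \leq Y+\frac{1}{n}1_\Omega$; monotonicity of $\cA$ then forces $Y+\frac{1}{n}1_\Omega \in \cA$, i.e.\ $X+\frac{1}{n}1_\Omega+\frac{m}{S_0}S_T \in \cA$, giving $\rho_{\cA,S}\bigl(X+\frac{1}{n}1_\Omega\bigr) \leq m$ for every $n$. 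This uniformly undercuts the convergence asserted in~(e) to the strictly larger value $\rho_{\cA,S}(X)$.

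The main obstacle is the final step (e)$\Rightarrow$(b): one must convert information along the one-dimensional constant direction $1_\Omega$ into geometric information along the a priori different direction $S_T$ that defines the risk measure. The conversion succeeds only because monotonicity of $\cA$ absorbs any norm-small perturbation into a constant upward shift, which is the same mechanism already exploited in Lemma~\ref{properties acceptance sets}. Part~(ii) then runs dually: (b')$\Leftrightarrow$(c') follows from $\Interior(\cA) \subset \cA$; (a')$\Leftrightarrow$(b') is again a translation argument, now using that $\Interior(\cA)$ is open; (c')$\Rightarrow$(d')$\Rightarrow$(e') goes exactly as before; and in (e')$\Rightarrow$(b'), if $X+\frac{m}{S_0}S_T \notin \Interior(\cA)$ for some $m>\rho_{\cA,S}(X)$, then every open ball around it contains a point outside $\cA$, and translating produces positions $X-\frac{1}{n}1_\Omega$ whose risk is bounded below by $m$, contradicting (e').
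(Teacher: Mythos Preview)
Your proposal is correct and rests on the same ideas as the paper's proof, but the organization differs slightly. The paper runs a single cycle (a)$\Rightarrow$(b)$\Rightarrow$(c)$\Rightarrow$(d)$\Rightarrow$(e)$\Rightarrow$(a); you instead first establish the block (a)$\Leftrightarrow$(b)$\Leftrightarrow$(c) and then close via (c)$\Rightarrow$(d)$\Rightarrow$(e)$\Rightarrow$(b). Two points of contrast: your (c)$\Rightarrow$(d) is shorter than the paper's because you invoke (a) (already known to be equivalent to (c)) and simply combine monotonicity with lower semicontinuity, whereas the paper argues directly from (c) by passing to the limit $X_n+\tfrac{\rho_{\cA,S}(X_n)}{S_0}S_T\in\overline{\cA}$; and where the paper closes with (e)$\Rightarrow$(a) using the order-neighborhood $\{Y:Y\le X+\tfrac{1}{n_\varepsilon}1_\Omega\}$, you close with (e)$\Rightarrow$(b) via the contrapositive. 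Both closures hinge on exactly the mechanism you identify---monotonicity of~$\cA$ converts a norm-small perturbation into a shift by a constant---so the content is the same.
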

\begin{proof}
We only prove part~{\em (i)} on lower semicontinuity. The proof of part~{\em (ii)} is similar.

\smallskip

To prove that~{\em (a)} implies~{\em (b)}, note that by $S$-additivity~$\rho_{\cA,S}$ is lower semicontinuous at~$X$ if and only if for any $m<\rho_{\cA,S}(X)$ there exists a neighborhood~$\cU$ of~$X$ such that $\rho_{\cA,S}(Y+\frac{m}{S_0}S_T)>0$ for all $Y\in\cU$. Then $Y+\frac{m}{S_0}S_T\notin\cA$ for every such~$Y$, implying $X+\frac{m}{S_0}S_T\notin\overline{\cA}$.

\smallskip

If~{\em (b)} holds, then $\rho_{\overline{\cA},S}(X)\geq\rho_{\cA,S}(X)$. Since the opposite inequality is always satisfied, it follows that~{\em (c)} holds.

\smallskip

To prove that~{\em (c)} implies~{\em (d)}, assume $X_n\downarrow X$. By monotonicity there exists $\rho_0\in\R$ such that $\rho_{\cA,S}(X_n)\to\rho_0\leq\rho_{\cA,S}(X)$. Since $X_n+\frac{\rho_{\cA,S}(X_n)}{S_0}S_T\in\overline{\cA}$, it follows that $X+\frac{\rho_0}{S_0}S_T\in\overline{\cA}$. Hence, $\rho_{\cA,S}(X)=\rho_{\overline{\cA},S}(X)\leq\rho_0$. As a result, $\rho_0=\rho_{\cA,S}(X)$, proving that~{\em (d)} holds.

\smallskip

Clearly~\textit{(d)} implies~\textit{(e)}. Finally assume~\textit{(e)} holds and set $X_n:=X+\frac{1}{n}1_\Omega$. Since by assumption $\rho_{\cA,S}(X_n)\to\rho_{\cA,S}(X)$, for any $\e>0$ there exists an integer $n_\e$ for which $\rho_{\cA,S}(X_{n_\e})\geq\rho_{\cA,S}(X)-\e$. Note that the set $\cU_\e:=\{Y\in\cX \,; \ Y\leq X_{n_\e}\}$ is a neighborhood of~$X$ and, by monotonicity, $\rho_{\cA,S}(Y)\geq\rho_{\cA,S}(X)-\e$ for all $Y\in\cU_\e$. Hence,~$\rho_{\cA,S}$ is lower semicontinuous at~$X$ and~{\em (a)} holds, concluding the proof.
\end{proof}

\begin{remark}
Consider an acceptance set $\cA\subset\cX$ and a traded asset $S=(S_0,S_T)$. It is clear from Theorem~\ref{pointwise semicontinuity} how to characterize pointwise continuity of~$\rho_{\cA,S}$. We just highlight that~$\rho_{\cA,S}$ is continuous at a point $X\in\cX$ where~$\rho_{\cA,S}$ is finite if and only if
\begin{enumerate}[(i)]
  \item $X+\frac{m}{S_0}S_T\notin\overline{\cA}$ for $m<\rho_{\cA,S}(X)$;
  \item $X+\frac{m}{S_0}S_T\in\Interior(\cA)$ for $m>\rho_{\cA,S}(X)$.
\end{enumerate}
That is, if and only if the line $X+\frac{m}{S_0}S_T$ comes from outside~$\overline{\cA}$ for $m<\rho_{\cA,S}(X)$, perforates the boundary~$\partial\cA$ at $m=\rho_{\cA,S}(X)$, and immediately enters~$\Interior(\cA)$ for $m>\rho_{\cA,S}(X)$.
\end{remark}

\medskip

Next we provide a characterization of the global lower and upper semicontinuity of~$\rho_{\cA,S}$.

\begin{proposition}[Global semicontinuity]
\label{semicontinuity}
Let $\cA\subset\cX$ be an acceptance set and~$S=(S_0,S_T)$ a traded asset.
\begin{enumerate}[(i)]
\item The following statements are equivalent:
\begin{enumerate}[(a)]
  \item $\rho_{\cA,S}$ is (globally) lower semicontinuous;
  \item $\{X\in\cX \,; \ \rho_{\cA,S}(X)\le 0 \}$ is closed;
  \item $\overline{\cA}=\{X\in\cX \,; \ \rho_{\cA,S}(X)\le 0\}$\,.
\end{enumerate}
\item The following statements are equivalent:
\begin{enumerate}[(a)]
  \item $\rho_{\cA,S}$ is (globally) upper semicontinuous;
  \item $\{X\in\cX \,; \ \rho_{\cA,S}(X)<0\}$ is open;
  \item $\Interior(\cA)=\{X\in\cX \,; \ \rho_{\cA,S}(X)<0\}$\,.
\end{enumerate}
\end{enumerate}
\end{proposition}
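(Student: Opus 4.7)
The two parts are symmetric, so I plan to prove part~(i) in detail and sketch the adjustments needed for~(ii). The three conditions can be linked in a cycle (a)$\Rightarrow$(b)$\Rightarrow$(c)$\Rightarrow$(a), and the proof rests on two tools: the sandwich inclusions provided by Lemma~\ref{inclusions A and rho} and the $S$-additivity of $\rho_{\cA,S}$.

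For (a)$\Rightarrow$(b), I would simply invoke the standard equivalence between lower semicontinuity and closedness of every sublevel set $\{X\in\cX\,;\,\rho_{\cA,S}(X)\le\lambda\}$; taking $\lambda=0$ yields~(b). For (b)$\Rightarrow$(c), I would appeal to the chain
\begin{equation*}
\cA\subset\{X\in\cX\,;\,\rho_{\cA,S}(X)\le 0\}\subset\overline{\cA}
\end{equation*}
from Lemma~\ref{inclusions A and rho}. Assuming the middle set is closed, taking the closure of the first inclusion shows that $\overline{\cA}$ is contained in it, while the second inclusion gives the reverse containment; hence equality in~(c).

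The step (c)$\Rightarrow$(a) is where $S$-additivity does the real work. For an arbitrary $\lambda\in\R$, the identity
\begin{equation*}
\rho_{\cA,S}\!\left(X+\tfrac{\lambda}{S_0}S_T\right)=\rho_{\cA,S}(X)-\lambda
\end{equation*}
translates the inequality $\rho_{\cA,S}(X)\le\lambda$ into the condition $X+\tfrac{\lambda}{S_0}S_T\in\{Y\,;\,\rho_{\cA,S}(Y)\le 0\}=\overline{\cA}$, where the last equality uses~(c). Since the translation $X\mapsto X+\tfrac{\lambda}{S_0}S_T$ is a homeomorphism of $\cX$, the sublevel set is the preimage of $\overline{\cA}$ under a homeomorphism and therefore closed. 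This holds for every $\lambda$, giving lower semicontinuity.

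Part~(ii) follows the same template, with the corresponding chain $\Interior(\cA)\subset\{X\,;\,\rho_{\cA,S}(X)<0\}\subset\cA$ from Lemma~\ref{inclusions A and rho}. Upper semicontinuity of $\rho_{\cA,S}$ is equivalent to openness of every strict sublevel set $\{X\,;\,\rho_{\cA,S}(X)<\lambda\}$, so (a)$\Leftrightarrow$(b) specializing at $\lambda=0$ is immediate. For (b)$\Rightarrow$(c), if the strict sublevel set is an open subset of $\cA$ it must lie in $\Interior(\cA)$, and the reverse inclusion was already given. For (c)$\Rightarrow$(a), the same $S$-additivity trick shows $\{X\,;\,\rho_{\cA,S}(X)<\lambda\}$ is the preimage of $\Interior(\cA)$ under a translation and thus open. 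There is no genuine obstacle here; the only point that requires care is keeping track of sign conventions in the $S$-additivity relation and making sure the inclusions from Lemma~\ref{inclusions A and rho} are used in the correct direction in each implication.
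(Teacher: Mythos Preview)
Your proposal is correct and follows essentially the same route as the paper: the cycle (a)$\Rightarrow$(b)$\Rightarrow$(c)$\Rightarrow$(a), with (b)$\Rightarrow$(c) coming from the sandwich inclusions of Lemma~\ref{inclusions A and rho} and (c)$\Rightarrow$(a) from $S$-additivity identifying each sublevel set as a translate of $\overline{\cA}$ (respectively $\Interior(\cA)$). The paper states the last step by writing the sublevel set directly as a translate of $\overline{\cA}$ rather than invoking a homeomorphism, but this is only a cosmetic difference.
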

\begin{proof}
As in the previous proposition we only prove part~(i). Clearly,~{\em (a)} implies~{\em (b)}. Moreover, by Lemma~\ref{inclusions A and rho} we have $\cA\subset\{X\in\cX \,; \ \rho_{\cA,S}(X)\le 0 \}\subset\overline{\cA}$. Therefore,~{\em (b)} implies~{\em (c)}. Finally, if~{\em (c)} holds then $\{X\in\cX \,; \ \rho_{\cA,S}(X)\le \lambda \}=\lambda S_T+\overline{\cA}$ for every $\lambda\in\R$, hence~{\em (a)} follows.
\end{proof}

\begin{remark}
\label{remark on semicont}
If~$\cA$ is a closed acceptance set, then~$\rho_{\cA,S}$ is lower semicontinuous for any choice of~$S$. If~$\cA$ is open, then~$\rho_{\cA,S}$ is upper semicontinuous for any choice of~$S$. Both properties follow easily from the preceding result and Lemma~\ref{inclusions A and rho}.
\end{remark}


\subsection{$\VaR$- and $\TVaR$-based capital requirements}

In this section we characterize the continuity of capital requirements based on Value-at-Risk and Tail-Value-at-Risk. It turns out that while for $\TVaR$-acceptability every finite capital requirement is also continuous, this is not the case for $\VaR$-acceptability. We will also show that, when the underlying probability space is nonatomic, $\VaR$-based capital requirements are continuous only when the payoff of the eligible asset is bounded away from zero. In contrast, risk measures based on $\TVaR$-acceptability may be continuous even if the eligible asset is a defaultable bond admitting zero recovery, provided the probability of zero recovery is not too high. As a consequence, when determining capital requirements with respect to a defaultable eligible asset, acceptability based on $\TVaR$ appears to be preferable, from an operational perspective related to the stability of the corresponding risk measure, to $\VaR$-acceptability.

\smallskip

The closed acceptance sets~$\cA_\alpha$ and~$\cA^\alpha$ based on Value-at-Risk and Tail Value-at-Risk have been introduced in Example~\ref{var tvar example}. Recall also that $\Interior(\cA_\alpha)=\{X\in\cX \,; \ \VaR_\alpha(X)<0\}$ and, similarly, $\Interior(\cA^\alpha)=\{X\in\cX \,; \ \TVaR_\alpha(X)<0\}$. These facts will be used without further reference.

\medskip

Since $\cA_\alpha$ is closed, Remark~\ref{remark on semicont} immediately implies the following preliminary result.

\begin{lemma}
\label{lsc of var-based risk measures}
Let $\alpha\in(0,1)$ and $S=(S_0,S_T)$ a traded asset. Then~$\rho_{\cA_\alpha,S}$ is lower semicontinuous.
\end{lemma}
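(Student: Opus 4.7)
The plan is to reduce the statement to the general criterion given in Proposition~\ref{semicontinuity}(i) by exploiting the closedness of $\cA_\alpha$ that was recorded in Example~\ref{var tvar example}. Since this is exactly the content of Remark~\ref{remark on semicont}, the ``proof'' is essentially a one-liner, and the only task is to make the invocation explicit.

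Concretely, I would first recall that the acceptance set
\[
\cA_\alpha=\{X\in\cX \,;\ \probp[X<0]\leq\alpha\}
\]
is closed in $\cX$; this was noted in Example~\ref{var tvar example} and follows from the well-known norm-continuity of the cash-additive map $\VaR_\alpha:\cX\to\R$. In particular, $\overline{\cA_\alpha}=\cA_\alpha$.

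Next I would combine this with the chain of inclusions in Lemma~\ref{inclusions A and rho}, which gives
\[
\cA_\alpha \subset \{X\in\cX \,;\ \rho_{\cA_\alpha,S}(X)\leq 0\} \subset \overline{\cA_\alpha}=\cA_\alpha\,,
\]
so that $\{X\in\cX \,;\ \rho_{\cA_\alpha,S}(X)\leq 0\}=\cA_\alpha$ is closed. By the equivalence of statements (a) and (b) in part~(i) of Proposition~\ref{semicontinuity} (applied after a translation by $\lambda S_T$ for arbitrary $\lambda\in\R$, as in the final step of its proof), this is precisely the condition for $\rho_{\cA_\alpha,S}$ to be (globally) lower semicontinuous.

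There is no real obstacle here: the lemma is stated separately only to record the fact for the subsequent continuity analysis of $\VaR$-based capital requirements, and its content is entirely subsumed by Remark~\ref{remark on semicont}. The only subtlety worth flagging is that the choice of the eligible asset~$S$ is irrelevant: closedness of the underlying acceptance set alone is sufficient to guarantee lower semicontinuity of $\rho_{\cA_\alpha,S}$, regardless of whether $S_T$ is bounded away from zero or not.
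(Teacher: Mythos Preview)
Your proposal is correct and matches the paper's approach exactly: the paper simply observes that $\cA_\alpha$ is closed and invokes Remark~\ref{remark on semicont}, which is precisely what you unpack in more detail via Lemma~\ref{inclusions A and rho} and Proposition~\ref{semicontinuity}(i).
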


\medskip

A financial position that is acceptable with respect to~$\cA_\alpha$ remains acceptable if we change it on a set of zero probability. The following proposition can be then regarded as a straightforward refinement of Proposition~\ref{finiteness baz}. Note that we are working on the space of bounded measurable functions on $(\Omega,\cF)$ and are not identifying functions which are $\probp$-almost surely identical. We will say that $X\in\cX$ is \textit{essentially bounded away from zero} if $\probp[X\geq\e]=1$ for some $\e>0$.

\begin{proposition}
\label{order unit var}
Let $S=(S_0,S_T)$ be a traded asset whose payoff~$S_T$ is essentially bounded away from zero and let $\alpha\in(0,1)$. Then~$\rho_{\cA_\alpha,S}$ is finitely valued and Lipschitz continuous.
\end{proposition}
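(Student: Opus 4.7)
The plan is to reduce the statement to Proposition~\ref{finiteness baz} by replacing the payoff~$S_T$ with a pointwise modification that is bounded away from zero but agrees with~$S_T$ up to a $\probp$-null set. Since the acceptance set~$\cA_\alpha$ is defined purely in terms of the probability~$\probp$, such a modification will not affect the associated capital requirement, and the payoff-bounded-away-from-zero case of Proposition~\ref{finiteness baz} will yield the conclusion.

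Concretely, by assumption there exists $\e>0$ with $\probp[S_T\geq\e]=1$. I would then set $\widetilde S_T:=S_T\vee\e 1_\Omega$, which lies in~$\cX$ and satisfies $\widetilde S_T\geq\e 1_\Omega$, so it is bounded away from zero. Letting $N:=\{S_T<\e\}$, we have $\probp[N]=0$ and $\widetilde S_T=S_T$ on $\Omega\setminus N$. Consequently, for every $X\in\cX$ and every $m\in\R$, the sets $\{X+\tfrac{m}{S_0}S_T<0\}$ and $\{X+\tfrac{m}{S_0}\widetilde S_T<0\}$ differ only on a subset of~$N$, and therefore have the same $\probp$-probability. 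This gives
\begin{equation}
X+\tfrac{m}{S_0}S_T\in\cA_\alpha \ \iff \ X+\tfrac{m}{S_0}\widetilde S_T\in\cA_\alpha\,,
\end{equation}
so that $\rho_{\cA_\alpha,S}(X)=\rho_{\cA_\alpha,\widetilde S}(X)$ for every $X\in\cX$, where $\widetilde S:=(S_0,\widetilde S_T)$.

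Since $\widetilde S_T$ is bounded away from zero, Proposition~\ref{finiteness baz} immediately yields that $\rho_{\cA_\alpha,\widetilde S}$ is finitely valued and Lipschitz continuous, and the identity $\rho_{\cA_\alpha,S}=\rho_{\cA_\alpha,\widetilde S}$ transfers both properties to $\rho_{\cA_\alpha,S}$. There is no real obstacle here: the only point requiring any care is verifying that the two events differ by a null set so that the \emph{pointwise} acceptability condition, defined on functions in~$\cX$, is preserved when switching from~$S_T$ to~$\widetilde S_T$; this is immediate from the construction of~$\widetilde S_T$.
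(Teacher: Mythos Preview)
Your proof is correct and follows essentially the same approach as the paper: modify $S_T$ on a $\probp$-null set to obtain a payoff bounded away from zero, observe that the $\VaR$-acceptance set is insensitive to such modifications so the risk measures coincide, and then invoke Proposition~\ref{finiteness baz}. The only cosmetic difference is that the paper uses $\widetilde S_T:=S_T+\e 1_{A^c}$ with $A:=\{S_T\geq\e\}$ rather than $S_T\vee\e 1_\Omega$, and states the equality $\rho_{\cA_\alpha,S}=\rho_{\cA_\alpha,\widetilde S}$ without the explicit justification you provide.
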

\begin{proof}
Since~$S_T$ is essentially bounded away from zero, there exist $A\in\cF$ with $\probp[A]=1$ and $\e>0$ such that $S_T\geq\e1_A$. Set $\widetilde{S}_0:=S_0$ and $\widetilde{S}_T:=S_T+\e1_{A^c}$. Then $\widetilde{S}_T\geq\e1_\Omega$, that is $\widetilde{S}_T$ is bounded away from zero. Since $\rho_{\cA_\alpha,\widetilde{S}}=\rho_{\cA_\alpha,S}$ the statement follows immediately from Proposition~\ref{finiteness baz}.
\end{proof}

\medskip

We next characterize when $\VaR$-based capital requirements with respect to a general eligible asset are continuous. We focus on the two cases that are most interesting in applications, namely that of a finite probability space and that of a nonatomic probability space. In case~$\Omega$ is finite, we first provide the following characterization of pointwise continuity.

\begin{proposition}
\label{pointwise cont var-based risk measures}
Let $(\Omega,\cF,\probp)$ be a finite probability space and~$\alpha\in(0,1)$. Consider a traded asset~$S=(S_0,S_T)$, and a position $X\in\cX$ with $\rho_{\cA_\alpha,S}(X)$ is finite. Define $\widetilde{X}:=X+\frac{\rho_{\cA_\alpha,S}(X)}{S_0}S_T$. The following conditions are equivalent:
\begin{enumerate}[(a)]
	\item $\rho_{\cA_\alpha,S}$ is continuous at~$X$;
	\item $\probp[\widetilde{X}<0]+\probp[\{X=0\}\cap\{S_T=0\}]\leq\alpha$.
\end{enumerate}
\end{proposition}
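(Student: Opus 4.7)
The plan is to reduce the continuity of $\rho_{\cA_\alpha,S}$ at $X$ to its upper semicontinuity at $X$, and then to unwind what the latter means concretely in the finite-state setting. Because $\cA_\alpha$ is closed, Lemma~\ref{lsc of var-based risk measures} already gives lower semicontinuity everywhere, so (a) is equivalent to upper semicontinuity at $X$. By Theorem~\ref{pointwise semicontinuity}(ii), upper semicontinuity at $X$ is in turn equivalent to
\begin{equation*}
Y_m := X + \frac{m}{S_0}S_T \in \Interior(\cA_\alpha) \quad \text{for every } m > \rho^* := \rho_{\cA_\alpha,S}(X).
\end{equation*}
Using $\Interior(\cA_\alpha)=\{Z\in\cX : \VaR_\alpha(Z)<0\}$ together with the observation that, on a finite $\Omega$, $\VaR_\alpha(Z)<0$ iff $\probp[Z\leq 0]\leq\alpha$ (since for sufficiently small $\varepsilon>0$ one has $\{Z<\varepsilon\}=\{Z\leq 0\}$), this further translates to $\probp[Y_m\leq 0]\leq\alpha$ for every $m>\rho^*$.

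The core step is then to compute $\probp[Y_m\leq 0]$ for $m$ just above $\rho^*$. Writing $Y_m=\widetilde X+\delta S_T$ with $\delta:=(m-\rho^*)/S_0>0$, I split $\Omega$ according to $\{S_T=0\}$ and $\{S_T>0\}$. On $\{S_T=0\}$ we have $\widetilde X=X$ and $Y_m=X$, so $\{Y_m\leq 0\}\cap\{S_T=0\}=\{X\leq 0\}\cap\{S_T=0\}$ for every $m$. On $\{S_T>0\}$, finiteness of $\Omega$ lets me take $\delta$ smaller than
\begin{equation*}
\delta_0 := \min\Bigl\{\frac{|\widetilde X(\omega)|}{S_T(\omega)} : \widetilde X(\omega)<0,\ S_T(\omega)>0\Bigr\},
\end{equation*}
(with the usual convention $\min\emptyset = +\infty$); for such $\delta$ the set $\{\widetilde X+\delta S_T\leq 0\}\cap\{S_T>0\}$ collapses to $\{\widetilde X<0\}\cap\{S_T>0\}$, since states with $\widetilde X>0$ or $\widetilde X=0$ are excluded for every $\delta>0$ while states with $\widetilde X<0$ are all captured once $\delta<\delta_0$.

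Adding the two contributions and using $\widetilde X=X$ on $\{S_T=0\}$ to rewrite $\probp[X\leq 0, S_T=0]=\probp[\widetilde X<0, S_T=0]+\probp[X=0, S_T=0]$, I obtain
\begin{equation*}
\probp[Y_m\leq 0] \;=\; \probp[\widetilde X<0] + \probp[\{X=0\}\cap\{S_T=0\}]
\end{equation*}
for every sufficiently small $\delta>0$. Since $S_T\geq 0$, the map $m\mapsto \probp[Y_m\leq 0]$ is non-decreasing as $m\downarrow\rho^*$, so the family of inequalities $\probp[Y_m\leq 0]\leq\alpha$ for $m>\rho^*$ is equivalent, via its supremum (attained for small $\delta$ in the finite case), to the single inequality in (b). This yields (a) $\Leftrightarrow$ (b).

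The main obstacle I foresee is the bookkeeping between strict and non-strict inequalities: $\Interior(\cA_\alpha)$ is defined via $\VaR_\alpha<0$, which in the finite setting corresponds to $\{Z\leq 0\}$ rather than $\{Z<0\}$, and the extra term $\probp[\{X=0\}\cap\{S_T=0\}]$ in (b) is precisely the correction that records this gap — it arises because on $\{S_T=0\}$ the perturbation $\delta S_T$ is inert and cannot push $Y_m$ strictly below zero when $X$ already vanishes there.
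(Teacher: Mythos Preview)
Your proposal is correct and follows essentially the same route as the paper: reduce continuity at $X$ to upper semicontinuity via Lemma~\ref{lsc of var-based risk measures}, invoke Theorem~\ref{pointwise semicontinuity}(ii), and then exploit the finiteness of $\Omega$ to identify the relevant event with $\{\widetilde X<0\}\cup(\{X=0\}\cap\{S_T=0\})$. The only cosmetic difference is that the paper uses characterization~(e) of Theorem~\ref{pointwise semicontinuity}(ii) (convergence of $\rho_{\cA_\alpha,S}(X-\tfrac{1}{n}1_\Omega)$) whereas you use characterization~(b) (membership of $Y_m$ in $\Interior(\cA_\alpha)$); your route is arguably cleaner since it avoids the double limit in $n$ and $\varepsilon$ and works directly with the explicit description $\Interior(\cA_\alpha)=\{Z:\probp[Z\le 0]\le\alpha\}$ valid on finite $\Omega$.
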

\begin{proof}
Since $\Omega$ is a finite set, it is not difficult to show that there exists $\e_0>0$ such that for every $0<\e<\e_0$ we eventually have
\begin{equation}
\label{var cont 1}
\{\widetilde{X}<0\}\cup(\{X=0\}\cap\{S_T=0\})=\left\{\widetilde{X}-\frac{1}{n}1_\Omega+\frac{\e}{S_0}S_T<0\right\}\,.
\end{equation}
As a result, we see from~\eqref{var cont 1} that the assertion~\textit{(b)} holds if and only if for every $0<\e<\e_0$ we eventually have $\rho_{\cA,S}(\widetilde{X}-\frac{1}{n}1_\Omega)\leq\e$, or equivalently $\rho_{\cA,S}(\widetilde{X}-\frac{1}{n}1_\Omega)\to 0$. By Theorem~\ref{pointwise semicontinuity}, this is equivalent to $\rho_{\cA,S}$ being upper semicontinuous at $\widetilde{X}$, hence at $X$, by $S$-additivity. Finally, since $\rho_{\cA,S}$ is globally lower semicontinuous by Lemma~\ref{lsc of var-based risk measures}, upper semicontinuity at $X$ is in fact equivalent to~\textit{(a)}, concluding the proof.
\end{proof}

\begin{remark}
\begin{enumerate}[(i)]
	\item The fact that condition~\textit{(b)} in the preceding proposition is equivalent to pointwise continuity at~$X$ becomes clear if we consider~\eqref{var cont 1}. In particular, if~\textit{(b)} is not satisfied, the capital requirement of~$X-\frac{1}{n}1_\Omega$ does not converge to the capital requirement of~$X$.
	\item By definition of~$\widetilde{X}$, we always have $\probp[\widetilde{X}<0]\leq\alpha$. For~$\rho_{\cA_\alpha,S}$ to be continuous at~$X$, condition~\textit{(b)} tells that the set $\{X=0\}\cap\{S_T=0\}$ should not add too much probability to $\{\widetilde{X}<0\}$. In particular, if $\probp[\{X=0\}\cap\{S_T=0\}]=0$ the risk measure~$\rho_{\cA_\alpha,S}$ is continuous at~$X$.
\end{enumerate}
\end{remark}

\smallskip

\begin{remark}
In the light of the previous result, we can better understand why the risk measure in Example~\ref{example no cont of var-based} is not continuous at $X:=1_{\{\omega_3\}}$. There, $\probp[X<0]=p_2\leq\alpha$, but $\{X=0\}\cap\{S_T=0\}=\{\omega_1\}$ has probability $p_1$, which is too high since $p_1+p_2>\alpha$.
\end{remark}

\medskip

Next we provide a global continuity result for $\VaR$-based capital requirements.

\begin{proposition}
\label{continuity var based}
Let $(\Omega,\cF,\probp)$ be a finite probability space and~$\alpha\in(0,1)$. Let~$S=(S_0,S_T)$ be a traded asset, and assume $\rho_{\cA_\alpha,S}$ is finitely valued. The following conditions are equivalent:
\begin{enumerate}[(a)]
	\item $\rho_{\cA_\alpha,S}$ is continuous on~$\cX$;
	\item $\probp[A]\leq\alpha-\probp[S_T=0]$ for every measurable set $A\subset\{S_T>0\}$ with $\probp[A]\leq\alpha$.
\end{enumerate}	
\end{proposition}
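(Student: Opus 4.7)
The plan is to leverage the pointwise characterization in Proposition~\ref{pointwise cont var-based risk measures}. Since $\rho_{\cA_\alpha,S}$ is globally lower semicontinuous by Lemma~\ref{lsc of var-based risk measures}, continuity of $\rho_{\cA_\alpha,S}$ on $\cX$ is equivalent to demanding that the pointwise criterion
\[
\probp[\widetilde X<0]+\probp[\{X=0\}\cap\{S_T=0\}]\leq\alpha
\]
hold for every $X\in\cX$, where $\widetilde X:=X+\frac{\rho_{\cA_\alpha,S}(X)}{S_0}S_T$. Closedness of $\cA_\alpha$ ensures $\widetilde X\in\cA_\alpha$, so in particular $\probp[\widetilde X<0]\leq\alpha$; moreover, the finiteness hypothesis together with Proposition~\ref{ruin and eligibility}(ii) yields $\probp[S_T>0]>\alpha$. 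The hard part is the direction (a)$\Rightarrow$(b), where I need to identify a single witness $X$ whose pointwise criterion collapses exactly to the inequality in~(b).

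For (a)$\Rightarrow$(b), given a measurable $A\subset\{S_T>0\}$ with $\probp[A]\leq\alpha$, my proposed witness is $X:=-1_A$. Since $\probp[X<0]=\probp[A]\leq\alpha$ we have $X\in\cA_\alpha$, hence $\rho_{\cA_\alpha,S}(X)\leq 0$. For any $m<0$, the set $\{X+\frac{m}{S_0}S_T<0\}$ contains both $A$ (where $-1+\frac{m}{S_0}S_T<0$) and $A^c\cap\{S_T>0\}$ (where $\frac{m}{S_0}S_T<0$), hence equals $\{S_T>0\}$, whose probability strictly exceeds~$\alpha$. Thus $\rho_{\cA_\alpha,S}(X)=0$ and $\widetilde X=X$. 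Because $A\cap\{S_T=0\}=\emptyset$ one has $\{X=0\}\cap\{S_T=0\}=\{S_T=0\}$, so the pointwise criterion at $X$ reads $\probp[A]+\probp[S_T=0]\leq\alpha$, which is precisely~(b). The choice $X=-1_A$ is natural because it sits on the boundary of $\cA_\alpha$ (preventing capital extraction and fixing $\widetilde X=X$) while vanishing on $\{S_T=0\}$ (so that the second summand in the criterion contributes the full mass $\probp[S_T=0]$).

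For (b)$\Rightarrow$(a), for arbitrary $X\in\cX$ I would use that $\widetilde X=X$ on $\{S_T=0\}$ to decompose
\[
\probp[\widetilde X<0]+\probp[\{X=0\}\cap\{S_T=0\}]=\probp[\{\widetilde X<0\}\cap\{S_T>0\}]+\probp[\{X\leq 0\}\cap\{S_T=0\}].
\]
The set $A:=\{\widetilde X<0\}\cap\{S_T>0\}$ lies in $\{S_T>0\}$ with $\probp[A]\leq\probp[\widetilde X<0]\leq\alpha$, so hypothesis~(b) yields $\probp[A]\leq\alpha-\probp[S_T=0]$. Bounding the second summand from above by $\probp[S_T=0]$ closes the estimate at~$\alpha$, and Proposition~\ref{pointwise cont var-based risk measures} then delivers continuity at $X$.
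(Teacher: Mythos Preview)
Your proof is correct and follows essentially the same approach as the paper's: the same witness $X=-1_A$ for (a)$\Rightarrow$(b) and the same set $A=\{\widetilde X<0\}\cap\{S_T>0\}$ for (b)$\Rightarrow$(a), combined with Proposition~\ref{pointwise cont var-based risk measures}. The only cosmetic differences are that the paper argues $\rho_{\cA_\alpha,S}(-1_A)\geq 0$ via monotonicity and $X\leq 0$ rather than by your direct computation, and that in the converse the paper uses the set inclusion $\{\widetilde X<0\}\cup(\{X=0\}\cap\{S_T=0\})\subset A\cup\{S_T=0\}$ in place of your exact decomposition.
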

\begin{proof}
For $X\in\cX$ we use the notation~$\widetilde{X}$ introduced in Proposition~\ref{pointwise cont var-based risk measures}. Note that $\widetilde{X}\in\cA_\alpha$ implies $\probp[\widetilde{X}<0]\leq\alpha$.

\smallskip

To prove that~\textit{(a)} implies~\textit{(b)}, take a measurable set $A\subset\{S_T>0\}$ with $\probp[A]\leq\alpha$, and set $X:=-1_A$. Note that $\probp[X<0]=\probp[A]\leq\alpha$, implying $\rho_{\cA_\alpha,S}(X)\leq0$. Moreover, since $X\leq0$, we actually have $\rho_{\cA_\alpha,S}(X)=0$, so that $\widetilde{X}=X$. Since $\rho_{\cA_\alpha,S}$ is continuous at~$X$ by assumption, it follows from Proposition~\ref{pointwise cont var-based risk measures} that
\begin{equation}
\label{var cont global 0}
\probp[A]+\probp[S_T=0] = \probp[\widetilde{X}<0]+\probp[\{X=0\}\cap\{S_T=0\}] \leq \alpha\,,
\end{equation}
showing that~\textit{(b)} holds.

\smallskip

To prove that~\textit{(b)} implies~\textit{(a)}, take $X\in\cX$ and define $A:=\{\widetilde{X}<0\}\cap\{S_T>0\}$. Note that
\begin{equation}
\label{var cont global 1}
\{\widetilde{X}<0\}\cup(\{X=0\}\cap\{S_T=0\})\subset A\cup\{S_T=0\}\,.
\end{equation}
Since $\probp[A]\leq\probp[\widetilde{X}<0]\leq\alpha$, we obtain by assumption that $\probp[A]+\probp[S_T=0]\leq\alpha$. Hence it follows from~\eqref{var cont global 1} that $\probp[\widetilde{X}<0]+\probp[\{X=0\}\cap\{S_T=0\}]\leq\alpha$, and therefore Proposition implies~$\rho_{\cA_\alpha,S}$ is continuous at~$X$, concluding the proof.
\end{proof}

\begin{remark}
\begin{enumerate}
	\item Condition~\textit{(b)} in the preceding proposition provides a simple sufficient condition for the continuity of risk measures based on $\VaR$-acceptability, and can be seen as the global version of condition~\textit{(b)} in Proposition~\ref{pointwise cont var-based risk measures}, as shown by~\eqref{var cont global 0} and~\eqref{var cont global 1}.
	\item Note that condition~\textit{(b)} is trivially satisfied whenever $\probp[S_T=0]=0$, which in the finite setting is equivalent to~$S_T$ being essentially bounded away from zero.
\end{enumerate}
\end{remark}

\medskip

Perhaps surprisingly, the range of eligible assets allowing for continuous $\VaR$-based risk measures is much narrower when the underlying probability space is nonatomic. In this case, we show that capital requirements based on $\VaR$-acceptability are globally continuous if and only if the payoff of the eligible asset is essentially bounded away from zero. As a result, when the eligible asset is a defaultable bond admitting zero recovery, or more generally a recovery which is not bounded away from zero with full probability, we never have continuity. Recall that a probability space $(\Omega,\cF,\probp)$ is nonatomic if it has no atoms, i.e. if it has no sets $A\in\cF$ with $\probp[A]>0$ such that either $\probp[B]=0$ or $\probp[B]=\probp[A]$ whenever~$B$ is a measurable subset of~$A$. For more details, see Appendix~A.3 in F\"{o}llmer and Schied~(2011).

\begin{proposition}
\label{var and continuity}
Let $(\Omega,\cF,\probp)$ be a nonatomic probability space and~$\alpha\in(0,1)$. Consider a traded asset~$S=(S_0,S_T)$ and assume~$\rho_{\cA_\alpha,S}$ is finitely valued. The following two statements are equivalent:
\begin{enumerate}[(a)]
	\item $\rho_{\cA_\alpha,S}$ is continuous on~$\cX$;
	\item $S_T$ is essentially bounded away from zero.
\end{enumerate}
\end{proposition}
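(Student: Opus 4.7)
The implication $(b)\Rightarrow(a)$ is immediate from Proposition~\ref{order unit var}. I focus on $(a)\Rightarrow(b)$, which I would prove by contrapositive: assuming $S_T$ is \emph{not} essentially bounded away from zero, i.e.\ $\probp[\{S_T<\e\}]>0$ for every $\e>0$, I will exhibit an $X\in\cX$ at which $\rho_{\cA_\alpha,S}$ fails to be upper semicontinuous. Combined with the global lower semicontinuity of $\rho_{\cA_\alpha,S}$ from Lemma~\ref{lsc of var-based risk measures}, this produces a point of discontinuity, refuting $(a)$.

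Because $\rho_{\cA_\alpha,S}$ is finitely valued, Proposition~\ref{ruin and eligibility}\,(ii) yields $\probp[\{S_T>0\}]>\alpha$, so there exists $c>0$ with $\probp[\{S_T\geq c\}]>\alpha$; nonatomicity then lets me select $B\in\cF$ with $B\subset\{S_T\geq c\}$ and $\probp[B]=\alpha$. Define $X:=-S_T\cdot 1_B\in\cX$. On $B$ one has $X+(m/S_0)S_T=\frac{m-S_0}{S_0}S_T$, while on $B^c$ one has $X+(m/S_0)S_T=\frac{m}{S_0}S_T$. A short case analysis, using $\probp[\{S_T>0\}]>\alpha$ to rule out $m<0$ and the strict positivity of $S_T$ on $B$ to determine the sign there, shows that the constraint $\probp[X+(m/S_0)S_T<0]\leq\alpha$ holds precisely when $m\geq 0$; consequently $\rho_{\cA_\alpha,S}(X)=0$.

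The core step is then to verify that $\rho_{\cA_\alpha,S}(X-(1/n)1_\Omega)\geq S_0$ for every $n\in\N$, which by Theorem~\ref{pointwise semicontinuity}\,(ii)(e) refutes upper semicontinuity of $\rho_{\cA_\alpha,S}$ at $X$. Fix $n\in\N$ and $m\leq S_0$. On $B$ the integrand $\frac{m-S_0}{S_0}S_T-\frac{1}{n}$ is $\leq-\frac{1}{n}<0$, already contributing probability $\probp[B]=\alpha$. On $B^c$, if $m\leq 0$ then all of $B^c$ lies in the event $\{X+(m/S_0)S_T-1/n<0\}$; and if $m\in(0,S_0]$ then, because $B^c\supset\{S_T<c\}$, this event contains $\{S_T<\min(c,S_0/(mn))\}$, which has strictly positive probability by our standing hypothesis on $S_T$. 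In either case the total probability strictly exceeds $\alpha$, so no $m\leq S_0$ satisfies the defining constraint, yielding the claimed lower bound.

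The main obstacle is choosing $X$ so that a uniform downward perturbation by $1/n$ cannot be absorbed by any feasible investment in $S_T$ of size at most $S_0$: isolating all small values of $S_T$ into $B^c$ via $B\subset\{S_T\geq c\}$, and then exploiting $\probp[\{S_T<\e\}]>0$ for every $\e>0$ on $B^c$, is precisely what blocks absorption. Note that the easy direction $(b)\Rightarrow(a)$ uses exactly that $S_T$ is bounded away from zero to rule out such small-$S_T$ obstructions, so the two directions interact through the same geometric fact.
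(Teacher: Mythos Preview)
Your proof is correct and follows essentially the same strategy as the paper: construct a position supported on a set $B\subset\{S_T\ge c\}$ of probability $\alpha$ so that the negative set already has mass $\alpha$, and then exploit the arbitrarily small values of $S_T$ on $B^c$ to show that upper semicontinuity fails. The only cosmetic differences are that the paper takes $X=-\|S_T\|\,1_A$ instead of $-S_T\,1_B$ and invokes characterization~(b) of Theorem~\ref{pointwise semicontinuity}\,(ii) (showing $X+S_T\notin\Interior(\cA_\alpha)$) rather than~(e).
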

\begin{proof}
By Proposition~\ref{order unit var} we only need to prove that~\textit{(a)} implies~\textit{(b)}. Assume to the contrary that the payoff~$S_T$ is not essentially bounded away from zero, so that $\probp[S_T<\lambda]>0$ for every $\lambda>0$. Clearly there exists $\lambda_0>0$ such that $0<\probp[S_T<\lambda_0]<1-\alpha$. Since $(\Omega,\cF,\probp)$ is nonatomic, we can find a measurable set $A\subset\{S_T\geq\lambda_0\}$ with $\probp[A]=\alpha$. Then set $X:=-\left\|S_T\right\|1_A$, and note that $\VaR_\alpha(X)=0$, hence $\rho_{\cA_\alpha,S}(X)\leq0$. Moreover, for any $0<\lambda\leq\lambda_0$ we clearly have $\{X+S_T<\lambda\}\cap A=A$ and $\{X+S_T<\lambda\}\cap A^c=\{S_T<\lambda\}$. It follows that for $0<\lambda\leq\lambda_0$
\begin{equation}
\probp[X+S_T<\lambda]=\probp[A]+\probp[S_T<\lambda]>\alpha\,.
\end{equation}
Hence, $\VaR_\alpha(X+S_T)\ge 0$ so that $X+S_T\not\in\Interior(\cA_\alpha)$. As a result, Theorem~\ref{pointwise semicontinuity} implies that $\rho_{\cA_\alpha,S}$ is not upper semicontinuous and, thus, not continuous at~$X$. It follows that if~$\rho_{\cA_\alpha,S}$ is continuous, then~$S_T$ must be essentially bounded away from zero.
\end{proof}

\medskip

Next we consider $\TVaR$-based capital requirements. Since the corresponding acceptance set~$\cA^\alpha$ is closed, Remark~\ref{remark on semicont} immediately implies the following result.

\begin{lemma}
Let $\alpha\in(0,1)$ and $S=(S_0,S_T)$ a traded asset. Then~$\rho_{\cA^\alpha,S}$ is lower semicontinuous.
\end{lemma}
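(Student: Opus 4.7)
The statement is essentially immediate from the machinery already developed. My plan is to unpack Remark~\ref{remark on semicont}, which asserts that closed acceptance sets always give rise to lower semicontinuous risk measures; the only real content is that $\cA^\alpha$ is closed.

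First, I would note that $\TVaR_\alpha : \cX \to \R$ is a cash-additive risk measure on the space of bounded measurable functions, hence is Lipschitz continuous (this fact is recalled in the paper just above Proposition~\ref{finiteness baz}). Consequently, the sublevel set
\begin{equation}
\cA^\alpha = \{X \in \cX \,;\, \TVaR_\alpha(X) \leq 0\}
\end{equation}
is a closed subset of $\cX$. That it is an acceptance set was already recorded in Example~\ref{var tvar example}.

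Next, I would invoke Proposition~\ref{semicontinuity}(i). By Lemma~\ref{inclusions A and rho} we always have the chain
\begin{equation}
\cA^\alpha \;\subset\; \{X \in \cX \,;\, \rho_{\cA^\alpha,S}(X) \leq 0\} \;\subset\; \overline{\cA^\alpha}.
\end{equation}
Since $\cA^\alpha$ is closed, the two outer sets coincide, which forces
\begin{equation}
\{X \in \cX \,;\, \rho_{\cA^\alpha,S}(X) \leq 0\} \;=\; \overline{\cA^\alpha},
\end{equation}
and this is precisely condition (c) in Proposition~\ref{semicontinuity}(i). Hence $\rho_{\cA^\alpha,S}$ is globally lower semicontinuous.

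There is really no obstacle: the entire content of the lemma is the closedness of the $\TVaR$-acceptance set, which in turn rests on the well-known Lipschitz continuity of $\TVaR_\alpha$ on $\cX$. The proof should therefore be just two or three lines, mirroring the preceding Lemma~\ref{lsc of var-based risk measures} for the $\VaR$ case.
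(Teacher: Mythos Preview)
Your proposal is correct and follows exactly the paper's own approach: the paper simply invokes Remark~\ref{remark on semicont} together with the closedness of $\cA^\alpha$, and your argument is nothing more than an explicit unpacking of that remark via Lemma~\ref{inclusions A and rho} and Proposition~\ref{semicontinuity}(i).
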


\medskip

By applying Theorem~\ref{convexity and continuity} we obtain a characterization of continuity for $\TVaR$-based capital requirements. Comparing this result with the preceding continuity results for $\VaR$-based risk measures, especially Proposition~\ref{var and continuity}, we conclude that, when determining capital requirements with respect to a defaultable eligible asset, acceptability based on $\TVaR$ appears to be preferable to $\VaR$-acceptability in terms of the corresponding risk measure stability.

\begin{proposition}
\label{continuity of TVaR}
Let $\alpha\in(0,1)$ and $S=(S_0,S_T)$ a traded asset. Then are equivalent:
\begin{enumerate}[(a)]
	\item $\rho_{\cA^\alpha,S}$ is (globally) Lipschitz continuous;
	\item $\TVaR_\alpha(S_T)<0$;
	\item $\probp[S_T=0]<\alpha$.
\end{enumerate}
\end{proposition}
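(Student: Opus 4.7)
The plan is to observe that Proposition~\ref{TVaR-acceptance} already delivers the equivalence \textit{(b)} $\Longleftrightarrow$ \textit{(c)}, and moreover both of these are equivalent to $\rho_{\cA^\alpha,S}$ being finitely valued. So the only new content to establish is that, in this coherent setting, finiteness is equivalent to global Lipschitz continuity.

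First I would record that the acceptance set $\cA^\alpha$ is coherent: it is the $0$-sublevel set of the sublinear, finite-valued map $\TVaR_\alpha:\cX\to\R$, hence a closed convex cone, and it is a proper acceptance set (cf.\ Example~\ref{var tvar example}). With this in hand, Theorem~\ref{convexity and continuity} applies directly: if $\rho_{\cA^\alpha,S}$ is finitely valued, then it is globally Lipschitz continuous. Combined with Proposition~\ref{TVaR-acceptance}, this gives \textit{(c)} $\Longrightarrow$ \textit{(b)} $\Longrightarrow$ \textit{(a)}.

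For the converse direction \textit{(a)} $\Longrightarrow$ \textit{(b)}, I would argue via finiteness. By Lemma~\ref{properties proposition}\textit{(iii)}, $\rho_{\cA^\alpha,S}$ is not identically $\pm\infty$, so there exists some $X_0\in\cX$ with $\rho_{\cA^\alpha,S}(X_0)\in\R$. If $\rho_{\cA^\alpha,S}$ is globally Lipschitz with constant $L$, then for any $X\in\cX$ we have $|\rho_{\cA^\alpha,S}(X)-\rho_{\cA^\alpha,S}(X_0)|\leq L\|X-X_0\|<\infty$, which forces $\rho_{\cA^\alpha,S}(X)\in\R$. Thus \textit{(a)} implies finiteness, and Proposition~\ref{TVaR-acceptance} then yields \textit{(b)} and \textit{(c)}.

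There is no real obstacle here: the heavy lifting has already been done in Theorem~\ref{convexity and continuity} (which exploits that coherent finite-valued risk measures on $\cX$ are Lipschitz by a standard positive-homogeneity plus subadditivity argument) and in Proposition~\ref{TVaR-acceptance} (which translates finiteness into the probabilistic condition on the recovery $S_T$). The only thing to be careful about is the trivial but necessary remark that Lipschitz continuity of an $\overline{\R}$-valued map that attains a finite value forces finiteness everywhere; this closes the loop \textit{(a)} $\Longrightarrow$ \textit{(b)} $\Longleftrightarrow$ \textit{(c)} $\Longrightarrow$ \textit{(a)}.
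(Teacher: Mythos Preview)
Your proposal is correct and follows essentially the same route as the paper, which simply invokes Theorem~\ref{convexity and continuity} together with Proposition~\ref{TVaR-acceptance} without writing out a separate proof. You make explicit the one detail the paper leaves implicit, namely that global Lipschitz continuity combined with the existence of a finite value (guaranteed by Lemma~\ref{properties proposition}\textit{(iii)}) forces finiteness everywhere, which is exactly what is needed to close the implication \textit{(a)} $\Rightarrow$ \textit{(b)}.
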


\begin{remark}
Note that capital requirements based on $\TVaR$-acceptability may be continuous even if the payoff of the eligible asset is not essentially bounded away from zero, and even in the extreme situation where the payoff is actually zero in some future scenario, provided the probability of full default is strictly controlled by $\alpha$. In the case that the underlying probability space is nonatomic, this is in apparent contrast to the $\VaR$-case, for which continuity is guaranteed only when the payoff of the eligible asset is essentially bounded away from zero.
\end{remark}


\section{Capital efficiency and optimal eligible assets}
\label{choice of eligible}

In this last section we address the following question: given an acceptance set, can we optimally choose the eligible asset in such a way that the corresponding required capital is lower than the required capital compatible with the same acceptance set but defined by a different eligible asset?


\subsection{Equality of risk measures}
\label{equality of risk measures}

We start by characterizing when two lower semicontinuous risk measures coincide. Recall that capital requirements based on $\VaR$- and $\TVaR$-acceptability are always lower semicontinuous.

\smallskip

For later use, we introduce the following notation. If~$S=(S_0,S_T)$ and~$R=(R_0,R_T)$ are two traded assets with the same initial price $P:=S_0=R_0$, we define for $m\in\R$
\begin{equation}
\cM_m(S,R):=\left\{\frac{m}{P}S_T+\lambda (R_T-S_T) \,; \ \lambda\in\R\right\}\,.
\end{equation}
Every element in~$\cM_m(S,R)$ represents the payoff of a possible portfolio we can set up by trading in~$S$ and~$R$ with initial capital~$m$.

\begin{proposition}
\label{equality for general measures}
Consider two acceptance sets $\cA,\cB\subset\cX$ and two traded assets $S=(S_0,S_T)$ and $R=(R_0,R_T)$ with the same initial price $S_0=R_0$. Assume that~$\rho_{\cA,S}$ and~$\rho_{\cB,R}$ are both lower semicontinuous. Then $\rho_{\cA,S}=\rho_{\cB,R}$ if and only if
\begin{equation}
\overline{\cA}=\overline{\cB} \ \ \ \mbox{and} \ \ \ \overline{\cA}=\overline{\cA}+\cM_0(S,R)\,.
\end{equation}
\end{proposition}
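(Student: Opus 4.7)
The plan is to combine Proposition~\ref{semicontinuity} with the $S$- and $R$-additivity properties from Lemma~\ref{properties proposition}. First, I would note that lower semicontinuity of $\rho_{\cA,S}$ and $\rho_{\cB,R}$, together with Proposition~\ref{semicontinuity}(i), yields the identifications $\overline{\cA}=\{X\in\cX \,; \ \rho_{\cA,S}(X)\leq 0\}$ and $\overline{\cB}=\{X\in\cX \,; \ \rho_{\cB,R}(X)\leq 0\}$. As a consequence of Theorem~\ref{pointwise semicontinuity}(i) we also have $\rho_{\cA,S}=\rho_{\overline{\cA},S}$ and $\rho_{\cB,R}=\rho_{\overline{\cB},R}$, so throughout the argument I may replace $\cA$ and $\cB$ by their closures. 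Set $P:=S_0=R_0$.

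For the implication ($\Leftarrow$), assume $\overline{\cA}=\overline{\cB}$ and $\overline{\cA}+\cM_0(S,R)=\overline{\cA}$. Fix $X\in\cX$ and $m\in\R$, and write
\begin{equation*}
X+\tfrac{m}{P}R_T = \bigl(X+\tfrac{m}{P}S_T\bigr) + \tfrac{m}{P}(R_T-S_T),
\end{equation*}
where the last summand lies in $\cM_0(S,R)$. Since $\cM_0(S,R)=-\cM_0(S,R)$ is a line through $0$, the invariance hypothesis gives that $X+\tfrac{m}{P}S_T\in\overline{\cA}$ if and only if $X+\tfrac{m}{P}R_T\in\overline{\cA}=\overline{\cB}$; taking infima over such $m$ yields $\rho_{\overline{\cA},S}(X)=\rho_{\overline{\cB},R}(X)$, which is the desired equality.

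For the implication ($\Rightarrow$), suppose $\rho_{\cA,S}=\rho_{\cB,R}$. Looking at the $0$-sublevel sets and invoking the identifications above immediately gives $\overline{\cA}=\overline{\cB}$. To prove the invariance, pick $Y\in\overline{\cA}$ and $\lambda\in\R$; I want to show $\rho_{\cA,S}\bigl(Y+\lambda(R_T-S_T)\bigr)\leq 0$. Using $S$-additivity of $\rho_{\cA,S}$, then the hypothesis $\rho_{\cA,S}=\rho_{\cB,R}$, and finally $R$-additivity of $\rho_{\cB,R}$:
\begin{equation*}
\rho_{\cA,S}\bigl(Y+\lambda R_T-\lambda S_T\bigr)=\rho_{\cA,S}(Y+\lambda R_T)+\lambda P=\rho_{\cB,R}(Y+\lambda R_T)+\lambda P=\rho_{\cB,R}(Y)=\rho_{\cA,S}(Y)\leq 0,
\end{equation*}
where the last equality uses $R_0=S_0=P$. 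Hence $Y+\lambda(R_T-S_T)\in\overline{\cA}$, giving $\overline{\cA}+\cM_0(S,R)\subset\overline{\cA}$; the reverse inclusion follows by taking $\lambda=0$.

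The argument is essentially a bookkeeping exercise once the lower semicontinuity has been converted to the statement that $\overline{\cA}$ is exactly the $0$-sublevel set of $\rho_{\cA,S}$, so the only real obstacle is ensuring that the two different additivity properties (with respect to $S_T$ and $R_T$ respectively) are chained together correctly; the fact that $S_0=R_0$ is what makes the translation terms cancel in the final computation.
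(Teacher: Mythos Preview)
Your proof is correct and follows essentially the same approach as the paper's: both reduce to closed acceptance sets via lower semicontinuity, identify $\overline{\cA}$ with the $0$-sublevel set through Proposition~\ref{semicontinuity}, and then chain $S$- and $R$-additivity to move between the two risk measures. Your sufficiency direction is marginally more direct (you obtain the biconditional on membership at once, whereas the paper argues one inequality and invokes symmetry), and you are more explicit in citing Theorem~\ref{pointwise semicontinuity} to justify $\rho_{\cA,S}=\rho_{\overline{\cA},S}$, but these are cosmetic differences.
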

\begin{proof}
By Proposition~\ref{semicontinuity} we may assume without loss of generality that~$\cA$ and~$\cB$ are both closed.

\smallskip

To prove necessity assume that $\rho_{\cA,S}=\rho_{\cB,R}$. Note that $\cA=\cB$ as a consequence of Proposition~\ref{semicontinuity}. It remains to show that if $X\in\cA$ and $\lambda\in\R$, then $X+\lambda(S_T-R_T)\in\cA$. But this follows from Proposition~\ref{semicontinuity}, since by additivity
\begin{equation}
\rho_{\cA,S}(X+\lambda(S_T-R_T))=\rho_{\cA,S}(X-\lambda R_T)-\lambda S_0=\rho_{\cB,R}(X)+\lambda R_0-\lambda S_0=\rho_{\cA,S}(X)\leq0\,.
\end{equation}

To prove sufficiency take $X\in\cX$. For all $m\in\R$ we have $X+\frac{m}{S_0}S_T=X+\frac{m}{R_0}R_T+\frac{m}{S_0}(S_T-R_T)$. Therefore, $X+\frac{m}{R_0}R_T\in\cA$ implies $X+\frac{m}{S_0}S_T\in\cA$. It follows that $\rho_{\cA,S}(X)\leq\rho_{\cB,R}(X)$. By exchanging the roles of~$S$ and~$R$ we obtain the reverse inequality, concluding the proof.
\end{proof}

\medskip

Under additional assumptions we can obtain a sharper characterization of the equality of risk measures.

\begin{corollary}
\label{strong equality of risk measures}
Let $\cA,\cB\subset\cX$ be two acceptance sets. Consider two traded assets $S=(S_0,S_T)$ and~$R=(R_0,R_T)$ with the same price $P:=S_0=R_0$. Assume that~$\rho_{\cA,S}$ and~$\rho_{\cB,R}$ are lower semicontinuous. Furthermore, assume that $\rho_{\cA,S}(0)$ is finite and that
\begin{equation}
\label{general no leverage condition}
\cM_{\rho_{\cA,S}(0)}(S,R)\cap\overline{\cA}=\left\{\frac{\rho_{\cA,S}(0)}{P}\,S_T\right\}\,.
\end{equation}
Then $\rho_{\cA,S}=\rho_{\cB,R}$ if and only if $\overline{\cA}=\overline{\cB}$ and $S_T=R_T$.
\end{corollary}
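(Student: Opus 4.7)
The plan is to reduce the claim to Proposition~\ref{equality for general measures} and then use the no-leverage hypothesis~\eqref{general no leverage condition} to upgrade the closure-level conclusion there into identification of the actual payoffs.

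For sufficiency, if $\overline{\cA}=\overline{\cB}$ and $S_T=R_T$, then after lower semicontinuity allows us to replace $\cA$ and $\cB$ by their closures via Proposition~\ref{semicontinuity}, the two risk measures are defined by the same infimum over the same closed set, so they coincide.

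For necessity, assume $\rho_{\cA,S}=\rho_{\cB,R}$. Proposition~\ref{equality for general measures} delivers at once both $\overline{\cA}=\overline{\cB}$ and the translation identity $\overline{\cA}+\cM_0(S,R)\subseteq\overline{\cA}$. It remains to show $S_T=R_T$. Write $m:=\rho_{\cA,S}(0)$, which is finite by assumption. From the definition of the infimum one extracts a sequence $m_n\downarrow m$ with $\frac{m_n}{P}S_T\in\cA$, and passing to the limit in norm yields $\frac{m}{P}S_T\in\overline{\cA}$. Applying the translation identity at this point gives
\begin{equation*}
\frac{m}{P}S_T+\lambda(R_T-S_T)\in\overline{\cA}\qquad\text{for every }\lambda\in\R,
\end{equation*}
that is, $\cM_m(S,R)\subseteq\overline{\cA}$. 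Combined with~\eqref{general no leverage condition}, this forces $\cM_m(S,R)=\{\frac{m}{P}S_T\}$; but $\cM_m(S,R)$ is by definition the affine line $\{\frac{m}{P}S_T+\lambda(R_T-S_T):\lambda\in\R\}$, and this line degenerates to a single point only if $R_T-S_T=0$.

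The main (mild) obstacle is recognising the direction in which hypothesis~\eqref{general no leverage condition} is meant to bite: it only caps the intersection of the line $\cM_m(S,R)$ with $\overline{\cA}$ from above, so the matching inclusion $\cM_m(S,R)\subseteq\overline{\cA}$ must be produced independently, and the natural place to find it is the translation identity in Proposition~\ref{equality for general measures}, seeded at the explicit point $\frac{m}{P}S_T\in\overline{\cA}$ extracted from $\rho_{\cA,S}(0)=m$. Note that the argument handles the case $m=0$ uniformly, since $0\in\overline{\cA}$ is still produced by the same infimum-limit procedure.
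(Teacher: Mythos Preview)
Your proof is correct and follows essentially the same route as the paper's: invoke Proposition~\ref{equality for general measures} to obtain $\overline{\cA}=\overline{\cB}$ and the translation identity, then apply the latter at the point $\frac{m}{P}S_T\in\overline{\cA}$ to force the entire line $\cM_m(S,R)$ into $\overline{\cA}$, whereupon~\eqref{general no leverage condition} collapses the line to a point and yields $S_T=R_T$. The only cosmetic difference is that you justify $\frac{m}{P}S_T\in\overline{\cA}$ via an approximating sequence, whereas the paper leaves this implicit (it follows immediately from $S$-additivity, since $\rho_{\cA,S}\bigl(\frac{m}{P}S_T\bigr)=\rho_{\cA,S}(0)-m=0$, together with Lemma~\ref{inclusions A and rho}).
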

\begin{proof}
The ``if'' part is obvious. To prove the ``only if'' part we note that from Proposition \ref{equality for general measures} we have that $\overline{\cA}=\overline{\cB}$ and $\overline{\cA}+\cM_0(S,R)=\overline{\cA}$.
Hence, for all $\lambda\in\R$
\begin{equation}
\frac{\rho_{\cA,S}(0)}{P}\,S_T+\lambda (R_T-S_T)\in\cM_{\rho_{\cA,S}(0)}\cap\overline{\cA}\,.
\end{equation}
and by our assumption we conclude that $S_T=R_T$ must hold.
\end{proof}

\begin{remark}
To interpret condition \eqref{general no leverage condition}
it is best to consider the case $\rho_{\cA,S}(0)=0$  which was called ``non-acceptability of leverage''in Artzner et al.\,(2009) and is described in Remark~1 following Proposition~1-a in that paper. Note again that the positions in $\cM_0(S,R)$ correspond to portfolios of~$S$ and~$R$ with zero initial value. These are {\em fully-leveraged} positions where either the purchase of an amount of asset~$S$ is financed by ``borrowed money'' obtained by entering a corresponding short position in the asset~$R$, or vice versa. Thus,~\eqref{general no leverage condition} is equivalent to requiring that no fully-leveraged position of assets~$S$ and~$R$ is acceptable. This can be viewed as a reasonably natural requirement since, typically, it is considered imprudent for a ``lender'' to lend money if the ``borrower'' does not have any additional capital to support the leveraged position. Hence, regulators are unlikely to accept fully-leveraged positions.
\end{remark}


\subsection{Absence of an optimal eligible asset}
\label{general optimality section}

The next theorem is the main result of this section and relies on the above characterization of the equality of lower semicontinuous risk measures. For a fixed acceptance set $\cA\subset\cX$, we cannot find two different traded assets~$S$ and~$R$ such that $\rho_{\cA,S}(X)\leq\rho_{\cA,R}(X)$ for all positions $X\in\cX$ and such that the inequality is strict for some position. In other words, there exists no {\em optimal} eligible asset~$S$. This implies that if a regulator allows financial institutions to make a position acceptable by raising capital and, irrespective of their individual balance sheets, investing this capital amount in the \textit{same} eligible asset --- for instance the risk-free security if it exists --- then some institutions may be forced to reach acceptability at a higher cost than would have been possible by choosing an alternative eligible asset.

\begin{theorem}
\label{general optimality theorem}
Let $\cA\subset\cX$ be an acceptance set and~$S=(S_0,S_T)$ and~$R=(R_0,R_T)$ two traded assets with the same initial price $S_0=R_0$. Assume that~$\rho_{\cA,S}$ and~$\rho_{\cA,R}$ are lower semicontinuous. If $\rho_{\cA,S}(X)\leq\rho_{\cA,R}(X)$ for every position $X\in\cX$, then $\rho_{\cA,S}=\rho_{\cA,R}$.
\end{theorem}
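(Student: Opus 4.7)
My plan is to reduce the theorem to Proposition~\ref{equality for general measures} applied with $\cB = \cA$. That proposition, for two lower semicontinuous risk measures with the same underlying acceptance set (up to closure), requires only one nontrivial condition, namely $\overline{\cA} = \overline{\cA}+\cM_0(S,R)$. So the entire task reduces to showing that $\overline{\cA}$ is invariant under translations by elements of $\cM_0(S,R) = \{\lambda(R_T-S_T) \,;\ \lambda\in\R\}$.

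Since $\rho_{\cA,S}$ and $\rho_{\cA,R}$ are both lower semicontinuous, Proposition~\ref{semicontinuity} tells us that
\begin{equation}
\overline{\cA} = \{Z\in\cX \,;\ \rho_{\cA,S}(Z)\leq 0\} = \{Z\in\cX \,;\ \rho_{\cA,R}(Z)\leq 0\}.
\end{equation}
This dual description of $\overline{\cA}$ is what lets the assumed inequality $\rho_{\cA,S}\leq\rho_{\cA,R}$ interact with the closure.

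The key calculation will be the following: fix $X\in\overline{\cA}$ and $\lambda\in\R$, and use $S$-additivity of $\rho_{\cA,S}$ together with $R$-additivity of $\rho_{\cA,R}$ to compute
\begin{equation}
\rho_{\cA,S}(X+\lambda(R_T-S_T)) = \rho_{\cA,S}(X+\lambda R_T)+\lambda S_0 \leq \rho_{\cA,R}(X+\lambda R_T)+\lambda S_0 = \rho_{\cA,R}(X),
\end{equation}
where the inequality uses the hypothesis and the last equality uses $S_0=R_0$. Since $X\in\overline{\cA}$ gives $\rho_{\cA,R}(X)\leq 0$, we conclude that $\rho_{\cA,S}(X+\lambda(R_T-S_T))\leq 0$, i.e.\ $X+\lambda(R_T-S_T)\in\overline{\cA}$. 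The reverse inclusion is trivial since $0\in\cM_0(S,R)$.

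With $\overline{\cA}+\cM_0(S,R)=\overline{\cA}$ and $\overline{\cA}=\overline{\cB}$ (for $\cB=\cA$) both established, Proposition~\ref{equality for general measures} yields $\rho_{\cA,S}=\rho_{\cA,R}$. I do not foresee a serious obstacle: the only subtle ingredient is noticing that the one-sided inequality $\rho_{\cA,S}\leq\rho_{\cA,R}$, combined with the common price $S_0=R_0$, produces a symmetric conclusion once one evaluates on the translated position $X+\lambda(R_T-S_T)$ and uses lower semicontinuity to identify $\overline{\cA}$ with the $0$-sublevel set of either risk measure.
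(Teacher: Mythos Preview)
Your proof is correct and follows essentially the same route as the paper's: reduce to Proposition~\ref{equality for general measures} with $\cB=\cA$, then use Proposition~\ref{semicontinuity} to identify $\overline{\cA}$ with the zero-sublevel sets of both risk measures, and finally exploit $S$- and $R$-additivity together with the hypothesis $\rho_{\cA,S}\le\rho_{\cA,R}$ and $S_0=R_0$ to show $\overline{\cA}+\cM_0(S,R)\subset\overline{\cA}$. The only cosmetic difference is that you compute with the translate $X+\lambda(R_T-S_T)$ whereas the paper writes $X+\lambda(S_T-R_T)$, which is immaterial since $\lambda$ ranges over all of~$\R$.
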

\begin{proof}
By Proposition~\ref{equality for general measures}, it suffices to show that $\overline{\cA}=\overline{\cA}+\{\lambda(S_T-R_T) \,; \ \lambda\in\R\}$ holds. Take $\lambda\in\R$ and $X\in\overline{\cA}$, and recall that $\rho_{\cA,R}(X)\le 0$ by Proposition~\ref{semicontinuity}. Since $\rho_{\cA,R}$ dominates~$\rho_{\cA,S}$, additivity implies $\rho_{\cA,S}(X+\lambda (S_T-R_T))\leq0$. Hence, again by Proposition~\ref{semicontinuity}, $X+\lambda(R_T-S_T)$ turns out to belong to $\overline{\cA}$, completing the proof.
\end{proof}

\medskip

If we require that fully-leveraged positions in~$S$ and~$R$ are not acceptable a stronger statement can be made. In this case the only possibility to have $\rho_{\cA,S}(X)\leq\rho_{\cA,R}(X)$ for all $X\in\cX$ is that~$S$ and~$R$ have even the same payoff. The result follows immediately from Theorem~\ref{general optimality theorem} and Corollary~\ref{strong equality of risk measures}.

\begin{corollary}
\label{optimality corollary}
Let $\cA\subset\cX$ be an acceptance set and~$S=(S_0,S_T)$ and~$R=(R_0,R_T)$ two traded assets with the same price $P:=S_0=R_0$. Assume that~$\rho_{\cA,S}$ and~$\rho_{\cB,R}$ are lower semicontinuous. Furthermore, assume that $\rho_{\cA,S}(0)$ is finitely valued and that
\begin{equation}
\cM_{\rho_{\cA,S}(0)}\cap\overline{\cA}=\left\{\frac{\rho_{\cA,S}(0)}{P}\,S_T\right\}\,.
\end{equation}
If $\rho_{\cA,S}(X)\leq\rho_{\cA,R}(X)$ for every $X\in\cX$ then $S_T=R_T$.
\end{corollary}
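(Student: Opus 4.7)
The plan is to chain together the two previously-established results, Theorem~\ref{general optimality theorem} and Corollary~\ref{strong equality of risk measures}, taking $\cB:=\cA$ in the latter. The hypothesis ``$\rho_{\cA,S}\leq\rho_{\cA,R}$ pointwise'' is exactly the input required by Theorem~\ref{general optimality theorem}, which outputs the full equality $\rho_{\cA,S}=\rho_{\cA,R}$ once we verify lower semicontinuity of both risk measures. Lower semicontinuity is part of the hypotheses, so the first step goes through at once.

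Once $\rho_{\cA,S}=\rho_{\cA,R}$ is established, I would feed this into Corollary~\ref{strong equality of risk measures} applied with the second acceptance set chosen as $\cB:=\cA$. All of the standing assumptions of that corollary are in place: the two assets share the common initial price $P$, both risk measures $\rho_{\cA,S}$ and $\rho_{\cA,R}$ (the latter playing the role of $\rho_{\cB,R}$) are lower semicontinuous, $\rho_{\cA,S}(0)$ is finite, and the no-leverage condition
\[
\cM_{\rho_{\cA,S}(0)}(S,R)\cap\overline{\cA}=\left\{\frac{\rho_{\cA,S}(0)}{P}\,S_T\right\}
\]
is exactly the displayed hypothesis of the corollary being proved. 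The conclusion of Corollary~\ref{strong equality of risk measures} is that $\overline{\cA}=\overline{\cA}$ (automatic) \emph{and} $S_T=R_T$, which is precisely what we want.

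There is essentially no hard step: everything is a direct application of already-proved results. The only mild point to check is the bookkeeping that the hypotheses of Corollary~\ref{strong equality of risk measures} are met when we set $\cB=\cA$; in particular, we should note that the corollary's assumption on lower semicontinuity of ``$\rho_{\cB,R}$'' becomes the assumption on $\rho_{\cA,R}$, which is part of the data. Structurally the proof is therefore a two-line argument: invoke Theorem~\ref{general optimality theorem} to upgrade the inequality to equality, then invoke Corollary~\ref{strong equality of risk measures} to upgrade that equality of functionals to equality of payoffs.
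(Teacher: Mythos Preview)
Your proposal is correct and follows precisely the paper's own route: the paper simply states that the result follows immediately from Theorem~\ref{general optimality theorem} and Corollary~\ref{strong equality of risk measures}, which is exactly the two-step chaining you describe.
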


\smallskip

\begin{remark}
In contrast to the corresponding result in Artzner et al.\,(2009), Theorem~\ref{general optimality theorem} and Corollary~\ref{optimality corollary} do not only apply to coherent risk measures but also to general risk measures, convex and not convex, including $\VaR$-based capital requirements.
\end{remark}

\smallskip

\begin{remark}
In Filipovi\'{c}~(2008) a different sort of optimality is addressed. That paper deals with a fixed convex risk measure~$\rho:\cX\to\R$ and shows there is no numeraire $U\in\cX$ which minimizes $\rho(X/U)$ for every $X\in\cX$. Note that, in this setting, $\rho$ is kept fixed and the numeraire, i.e. the unit of account, changes. Hence, with every choice of the numeraire the set of acceptable positions may change. By contrast, in our framework the acceptability criterion is independent of the particular choice of the unit of account and the natural question to ask is: what happens if we keep the acceptability criterion fixed and vary the eligible asset, i.e. the asset we are allowed to invest in?
\end{remark}

\section{Conclusion}
In the context of spaces of bounded measurable functions we have studied  finiteness and continuity properties of capital requirements with respect to general acceptance sets and general eligible assets. Considering general acceptance sets is necessary in order to include both Value-at-Risk and Tail-Value-at-Risk acceptability criteria which constitute the two most prominent examples found in the financial industry. Considering general eligible assets is important because, in contrast to the cash-additive setting, it allows for the more realistic situation where the eligible asset is a defaultable bond with a recovery rate which is not bounded away from zero. Our results and examples show that --- depending on the interplay between the acceptance set and the eligible asset --- general capital requirements display a wider range of behaviors in terms of finiteness and continuity than classical cash-additive risk measures. Hence, the theory for general risk measures turns out to be richer than that for cash-additive risk measures. We have also established that, regardless of the acceptability criterion, it is never possible to find an optimal eligible asset in the sense that choosing any other eligible asset would always result in higher capital requirements.


\end{document}